\newfont{\handw}{cmmi10 scaled 1200}
\newtheorem{Prop}{Proposition}[section]
\newtheorem{Th}[Prop]{Theorem}
\newtheorem{Def}[Prop]{Definition}
\newtheorem{Cor}[Prop]{Corollary}
\newfont{\smcal}{cmu10 scaled 1200}
\newcommand{\Kern}{\operatorname {kern}}
\newcommand{\id}{\operatorname {id}}
\newcommand{\tr}{\operatorname {tr}}
\newcommand{\rank}{\operatorname {rank}}
\newcommand{\lw}{\mbox{\handw \symbol{96}}}
\begin{document}
      \title{Dynamic Shape Analysis and\\ Comparison of Leaf Growth}
   \author{Stephan Huckemann}
    \maketitle


\begin{abstract}In the statistical analysis of shape a goal beyond the analysis of static shapes 
lies in the quantification of  `same' deformation of different shapes. Typically, shape spaces are modelled as Riemannian manifolds on which parallel transport along geodesics naturally qualifies as a measure for the `similarity' of deformation. Since these spaces are usually defined as combinations of Riemannian immersions and submersions, only for few well featured spaces such as spheres or complex projective spaces (which are Kendall's spaces for 2D shapes), parallel transport along geodesics can be computed explicitly. 
 In this contribution a general numerical method to compute parallel transport along geodesics when no explicit formula is available is provided. This method is applied to the shape spaces of closed 2D contours based on angular direction 
 and to Kendall's spaces of shapes of arbitrary dimension. In application to the temporal evolution of leaf shape over a growing period, one leaf's shape-growth dynamics can be applied to another leaf. For a specific poplar tree investigated it is found that leaves of initially and terminally different shape evolve rather parallel, i.e. with comparable dynamics.\end{abstract}

\par
\vspace{9pt}
\noindent {\it Key words and phrases:}
geodesics, parallel--transport, Riemannian im--/submersion, shape analysis, forest biometry, leaf growth
\par
\vspace{9pt}
\noindent {\it AMS 2000 Subject Classification:} \begin{minipage}[t]{6cm}
Primary 62H35\\ Secondary 53C22
 \end{minipage}
\par


\section{Introduction}\label{intro_scn}

	For more than two millenia, the analysis of form of biological entities has been an enticing object of human occupation. While early work tended to be speculative in nature, the current state of mathematics and computational power allow to develop and simultaneously verify theoretical results, thereby increasingly driving  scientific progress as witnessed today.

	This present work has been motivated by joint research with the Institute for Forest Biometry and Informatics at the University of G\"ottingen, to compare leaf growth dynamics within single specimen, species and taxa for identification of gene expression. The endeavor is challenging as it touches problems at least as old as Theophrastus' (371 -- 287 b.C.) famous book on ``plant growth'', cf. \cite{Th76}.
	
	The first step of this project is the subject of this work: to develop a framework allowing to compare shape dynamics. To this end, we model biological growth by (generalized) geodesics in shape space. We do so because the geometry of shape spaces in which travel along geodesics requires no energy seems linked to the physiological reality of growth preferring to minimize energy. This ``geodesic hypothesis'', originally stated by \cite{LK00}) is further supported by earlier research, cf. \cite{HZ06}; \cite{HHGMS07}. As with the ``geodesic hypothesis'' one can say that \emph{geodesic shape deformation} of two different shapes is the ``same'' if the impetus of the first deformation is transplanted to the second with no loss of energy. In the language of Riemannian geometry this translates to the condition that the initial velocity of the second geodesics is the \emph{parallel transport} of the initial velocity of the first geodesic. If the deformations are not the same, i.e. the geodesics are not parallel at the first and the second shape, this  concept gives a correlation-based distance between the deformations.


	In consequence, the aim of this paper is to provide for parallel transport on shape spacs. Recall that most shape spaces can be viewed as Riemannian immersions or submersions or, combinations thereof.
	Explicit formulae for parallel transport are only available for special spaces. e.g. for spheres and Kendall's spaces of planar shapes, cf. \cite{HHM09}. In general, parallel transport may be difficult to compute and be only available numerically. In the following Section \ref{Para-scn} we provide for a general method to compute parallel transport on shape spaces. In view of our application the method is illustrated in Section \ref{ZR-scn} for the spaces of closed 2D contours based on angular direction with and without specific initial point (cf. \cite{ZR72} as well as \cite{KSMJ04}), and in Section \ref{Kendall_scn} for Kendall's landmark based shape spaces (e.g. \cite{DM98}). 

	In Section \ref{k-rot-inv:scn}, we compare parallel transport on the spaces of closed contours with parallel transport on Kendall shape spaces for simple regular polygonal configurations. While all sectional curvatures in Kendall's shape space are bounded from below by $1$, it turns out that the corresponding subspace of closed contours is flat.
 
	Finally in Section \ref{App:scn}, leaf growth of one leaf is transported parallelly to other leaves and both shape evolutions are compared with one another. For a specific Canadian black poplar investigated we find that leaves with initially and terminally different shapes tend to evolve parallel, in particular so if no shape anomalies are present. Thus the \emph{geodesic hypothesis} 
	can be extended to the \emph{parallel hypothesis}:
	\begin{center}{\it 
		biological growth of related objects, possibly of initially and terminally different shape, tends to follow parallel geodesics,}
	\end{center}
	Using Euclidean approximations in landmark based shape spaces rather than geodesics, this hypothesis was originally coined by \cite{MKMA00} who observed parallel growth patterns.
	Readers primarily interested in the application can directly skip to Section \ref{App:scn}.

\section{Parallel Transport (PT)}\label{Para-scn}

	This section begins with a review of basic concepts of Riemannian geometry found in any standard textbook (specifically \cite{L99} is very appropriate for the following), in particular formulae relating covariant derivatives of Riemannian immersions and submersions. These provide differential equations lifting the parallel transport on shape space to Euclidean or Hilbert space. 

	For a Riemannian manifold $M$, possibly of countable dimension denote by $\langle V_p,W_p\rangle^M$ the \emph{Riemannian metric}
	of tangent spaces 
	and by $\nabla^M_VW$ the \emph{covariant derivative} of vector-fields. Here $V,W\in T(M)$ denote vector-fields with values $V_p,W_p$ in the tangent space $T_pM$ of $M$ at $p\in M$. $d^M(p,p')$ is the induced metrical distance on $M$ for $p,p'\in M$,  $V\otimes W$ denotes the \emph{outer product} defined by $(V\otimes W)\,X = \langle X,V\rangle\,W$. A vector-field $W\in T(M)$ is \emph{parallel} along a smooth curve $t\to \gamma(t)$ on $M$ if it satisfies the ordinary differential equation (ODE)
	\begin{eqnarray}\label{parallel_eq}
	\nabla_{\dot{\gamma}}W &=& 0\,.
	\end{eqnarray}
	It is well known that there is locally a unique solution $W$ along $\gamma$ for a given initial value. In Euclidean or Hilbert space the left hand side 
	has the simple form (\ref{Eucl_nabl_eq}).
	
	In particular, geodesics are characterized by the fact that their velocity is parallel:
	$\nabla_{\dot{\gamma}}\dot{\gamma} = 0\,.$
	
	The covariant derivative is often called a \emph{covariant connection}. Indeed, if two offsets $p,p'\in M$ can be joined by a unique geodesic segment of minimal length, their respective tangent spaces are \emph{connected} via \emph{parallel transport} (PT). 
	\begin{Def}\label{parallel_transp_def}
	$w' \in T_{p'}M$ is the \emph{parallel transplant} 
	of $w\in T_pM$ if there are 
	\begin{enumerate}
	 \item 	a unique unit speed geodesic $t\to \gamma(t)$  connecting $p=\gamma(0)$ with $p'=\gamma\big(d^M(p,p')\big)$, and
	\item a vector field $W\in T(M)$ parallel along $\gamma$ with $W_p=w, W_{p'}=w'$.
	\end{enumerate}
	\end{Def}

	A sufficient condition for the existence of such a unique connecting geodesic is that $M$ is finite dimensional and $p'$ is sufficiently close to $p$. In case of infinite dimension, examples of complete spaces can be constructed which do not feature minimizing geodesics between arbitrary close points (e.g. \citet[pp.226/7]{L99}). For our applications in mind this fact seems less troublesome since infinite dimensional spaces considered here are built from projective limits of finite dimensional spaces.

	The Euclidean and Hilbert spaces $\mathbb R^n$ (for Hilbert space $n=\infty$) can be identified with all of their tangent spaces, i.e. $\langle v,w\rangle^{\mathbb R^n} = \sum_{i=1}^n v^iw^i$  and the covariant derivative is just the usual multivariate derivative by components,
	$$\nabla^{\mathbb R^n}_{(v^1,\ldots,v^n)}(w^1,\ldots,w^n) = \sum_{i=1}^nv^i\left(\frac{\partial w^1}{\partial x^i},\ldots, \frac{\partial w^n}{\partial x^i}\right)\,.$$
	In particular, if $v=\dot{x}(t)$, i.e. $\frac{dx^i}{dt} = v^i$ we have that
	\begin{eqnarray}\label{Eucl_nabl_eq}
	\nabla^{\mathbb R^n}_{\dot{x}(t)} W &=& \frac{d~}{dt}\, W_{x(t)}\,.
	\end{eqnarray}

	Thus as desired, parallel transport on Euclidean and Hilbert spaces is given by affine translations. 

	For short we write $W(t)$ for the value of $W$ along a selfunderstood smooth curve $t\to\gamma(t)$ and 
	$$\dot{W}(t) ~:= ~\frac{d~}{dt}\, W_{\gamma(t)}$$
	in the Euclidean/Hilbert case.

	A surjective linear mapping $f : E\to F$ of topological vector-spaces \emph{splits in $F$} if $\Kern(f)$ has a closed complement $\widetilde{F}$ in $E$ such that $ \widetilde{F}\times\Kern(f)\cong E$ as topological vector-spaces, in particular, $ \Kern(f)\to E\to F$ is a \emph{short exact sequence}. Another wording is that \emph{$F$ splits over $E$}.
 
	A smooth mapping $\Phi : M \to N$ of Riemannian manifolds $M$ and $N$ induces a differential mapping $d\Phi_p:T_pM \to T_{\Phi(p)}N$ of tangent spaces. $\Phi$ is called  
	\begin{enumerate}
	\item[] an \emph{immersion} if $\Phi$ is injective and if every $T_{\Phi(p)}N$ splits over $d\Phi_pT_pM$,
	\item[] a \emph{submersion} if $\Phi$ is surjective and if every $d\Phi_p$ splits in $T_{\Phi(p)}N$,
	\item[] an \emph{isometry} if 
	$\langle V_p,W_p\rangle^M = \langle d\Phi_pV_p,d\Phi_pW_p\rangle^N\,,~~\forall p \in M\mbox{ and }V,W \in T(M)\,.$
	\end{enumerate}
	An isometric immersion (submersion) is a \emph{Riemannian immersion (submersion)} respectively. 
	
	\paragraph{Riemannian Immersions.} If $\Phi:M\to N$ is a Riemannian immersion then the tangent spaces of $N$ split into the tangent spaces of $\Phi(M)$ and its orthogonal complements, the \emph{normal spaces}
	\begin{eqnarray*}
	T_{\Phi(p)} N &=& T_{\Phi(p)}\Phi(M) \oplus N_{\Phi(p)}\Phi(M)\,.
	\end{eqnarray*}
	As a consequence of the implicit function theorem, every Riemannian immersion $\Phi:M\to N$ admits locally an implicit representation $\Psi : U\cap N \to  N_{\Phi(p)}\Phi(M)$ such that $\Psi(U\cap M) = U \cap \Phi(N)$. Here $U$ is a suitable neighborhood of $\Phi(p)$ in $N$. Hence, we have with $X,Y\in T(M)$ and arbitrary local extensions $\widetilde{X},\widetilde{Y} \in T(U\cap N)$ of $d\Phi X, d\Phi Y \in TM$  that
	\begin{eqnarray}\label{imm-cov-der-eq}
	(\id_{T(N)}-d\Psi)\left(\nabla^N_{\widetilde{X}} \widetilde{Y}\right) &=& 
	d\Phi \left(\nabla^M_X Y\right)\,. 
	\end{eqnarray}
	In particular, $d\Psi_{\Phi(p)}$ spans the normal space $N_{\Phi(p)}\Phi(M)$. 

	\begin{Th}\label{imm-parallel-th} Suppose that an embedding $id_M:M\hookrightarrow \mathbb R^n$ is a Riemannian immersion in Euclidean ($n<\infty$) or Hilbert space ($n=\infty$), $t\to\gamma(t)$ a geodesic in $M$, $\{V_j(t):j\in J\}$ an orthonormal smooth base for $N_{\gamma(t)}M$ and $W$ a vector-field in $M$. Then $W$ is parallel along $\gamma$ if and only if it satisfies the linear differential equation
	\begin{eqnarray*}
	 \dot{W}(t)&=&-\left(\sum_{j\in J} \dot{V}_j(t)\otimes V_j(t)\right)\,W(t)\,.
	\end{eqnarray*}
	\end{Th}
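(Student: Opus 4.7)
My plan is to reduce the claim to an orthogonality calculation, using (\ref{imm-cov-der-eq}) to translate parallelism along $\gamma$ into a statement about the Euclidean derivative $\dot W(t)$.

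First, I would observe that, since $W$ is a vector field on $M$ and the embedding $\id_M$ is an immersion, $W(t)\in T_{\gamma(t)}M$ is orthogonal to every normal vector, so $\langle W(t),V_j(t)\rangle^{\mathbb{R}^n}=0$ for all $j\in J$ and all $t$. Differentiating this identity gives
\begin{eqnarray*}
\langle \dot W(t),V_j(t)\rangle &=& -\langle W(t),\dot V_j(t)\rangle,
\end{eqnarray*}
which is the key identity that encodes the tangency constraint as a relation between time derivatives.

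Next I would invoke (\ref{imm-cov-der-eq}) applied to $\Phi=\id_M$ and $N=\mathbb{R}^n$: the map $\id_{T(N)}-d\Psi$ is precisely orthogonal projection onto the tangent bundle of $M$, so $\nabla^M_{\dot\gamma}W$ equals the tangential component of $\nabla^{\mathbb{R}^n}_{\dot\gamma}W=\dot W(t)$. Hence $W$ is parallel along $\gamma$ if and only if $\dot W(t)$ is purely normal, i.e.
\begin{eqnarray*}
\dot W(t) &=& \sum_{j\in J}\langle \dot W(t),V_j(t)\rangle\,V_j(t).
\end{eqnarray*}
Substituting the orthogonality identity from the first step and recognising the outer product $(\dot V_j\otimes V_j)W=\langle W,\dot V_j\rangle V_j$ then yields the claimed ODE.

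The calculation is essentially two lines once (\ref{imm-cov-der-eq}) is in hand, so there is no genuinely hard step; the mildest subtlety is legitimising the pointwise orthogonal expansion $\sum_{j}\langle\cdot,V_j\rangle V_j$ of the normal part in the Hilbert case ($n=\infty$), which is covered by the hypothesis that $\{V_j(t)\}_{j\in J}$ is a smooth orthonormal basis of $N_{\gamma(t)}M$, so that convergence of the series is automatic.
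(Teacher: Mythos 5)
Your argument is correct and is essentially identical to the paper's own proof: the paper likewise combines (\ref{imm-cov-der-eq}) (parallelism $\Leftrightarrow$ $\dot W(t)$ purely normal) with the differentiated orthogonality relation $\langle \dot W(t),V_j(t)\rangle = -\langle W(t),\dot V_j(t)\rangle$ and the definition of the outer product. You merely spell out the steps the paper calls ``immediate,'' including the harmless remark on convergence of the normal expansion in the Hilbert case.
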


	\begin{proof}
	The assertion is an immediate consequence of (\ref{imm-cov-der-eq}) and the fact that
	$$0~=~\frac{d}{dt} \big\langle W(t),V_j(t)\big\rangle ~=~ \big\langle \dot{W}(t),V_j(t)\big\rangle + \big\langle W(t),\dot{V}_j(t)\big\rangle$$
	for all $j\in J$ by hypothesis.	 
	\end{proof}

	\paragraph{Riemannian Submersions}
	For a Riemannian submersion $\Phi : M \to Q$  from the \emph{top space} $M$ to the \emph{bottom space} $Q$, tangent spaces split as follows:
	every fiber $\Phi^{-1}(q)$, $q\in Q$ is a submanifold of $M$ that is locally a topological embedding. With the \emph{vertical space} $T_p \Phi^{-1}\big(\Phi(p)\big)$ along the fiber and its orthogonal complement, the \emph{horizontal space}, we have 
	$$T_p M =  T_p \Phi^{-1}\big(\Phi(p)\big) \oplus H_pM\,.$$
	Since $H_pM \cong T_{\Phi(p)} Q$, every $V\in T(Q)$ has a unique horizontal lift $\widetilde{V} \in H_pM$ characterized by $d\Phi \widetilde{V} = V$. For arbitrary $W\in T(M)$ denote by $W^\perp: p \to W^\perp_p$ the orthogonal projection to the vertical space. 

	The following Theorem due to \cite{O66} (cf. also \citet[p.386]{L99}) allows to lift bottom space parallel transport to the top space. In addition to (\ref{imm-cov-der-eq}) this provides the vertical (normal) part as well, which is in general non-zero for submersions.

	\begin{Th}\label{ONEILL_thm} Let $\Phi:M\to Q$ be a Riemannian submersion and let $X,Y\in T(Q)$. Then we have with the \emph{Lie bracket} $[\cdot,\cdot]$ on $M$ that
	\begin{eqnarray*}
	\nabla^M_{\widetilde{X}} \widetilde{Y} &=& 
		\widetilde{\nabla^N_X Y} 
		+ ~\frac{1}{2}~ [\widetilde{X},\widetilde{Y}]^{\perp}\,.
	\end{eqnarray*}
	\end{Th}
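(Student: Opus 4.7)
The plan is to prove the identity by decomposing $\nabla^M_{\widetilde{X}} \widetilde{Y}$ into its horizontal and vertical parts and identifying each separately against the two terms on the right hand side. Throughout, I would assume $\widetilde{X}, \widetilde{Y}$ are basic horizontal lifts (extended from values at $p$), noting afterwards that both sides of the equation are tensorial in $X$ and $Y$ at the point of interest, so the statement for basic lifts immediately yields the statement for tangent vectors.

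For the horizontal part, I would pair $\nabla^M_{\widetilde{X}} \widetilde{Y}$ with an arbitrary basic horizontal lift $\widetilde{Z}$ and apply the Koszul formula on $M$. Two facts reduce every term: first, inner products of horizontal lifts descend, $\langle \widetilde{X}, \widetilde{Y}\rangle^M = \langle X, Y\rangle^Q \circ \Phi$, so each differentiation term $\widetilde{Z}\langle \widetilde{X}, \widetilde{Y}\rangle^M$ equals $(Z\langle X, Y\rangle^Q)\circ\Phi$; second, the horizontal component of $[\widetilde{X}, \widetilde{Y}]$ equals $\widetilde{[X,Y]}$ (because horizontal lifts are $\Phi$-related to their projections), so Lie bracket terms paired against $\widetilde{Z}$ contribute only through the bottom-space bracket. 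Matching the resulting formula with Koszul on $Q$ yields $\langle \nabla^M_{\widetilde{X}} \widetilde{Y}, \widetilde{Z}\rangle^M = \langle \nabla^Q_X Y, Z\rangle^Q \circ \Phi$ for every $\widetilde{Z}$, which identifies the horizontal component as $\widetilde{\nabla^Q_X Y}$.

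For the vertical part, I would test against an arbitrary vertical vector field $U$. Since $\widetilde{Y}$ is horizontal and $U$ vertical, metric compatibility gives $\langle \nabla^M_{\widetilde{X}} \widetilde{Y}, U\rangle^M = -\langle \widetilde{Y}, \nabla^M_{\widetilde{X}} U\rangle^M$. Torsion-freeness rewrites $\nabla^M_{\widetilde{X}} U = \nabla^M_U \widetilde{X} + [\widetilde{X}, U]$, and the crucial lemma is that $[\widetilde{X}, U]$ is vertical (the flow of a basic horizontal field projects to the flow of $X$ on $Q$, hence maps fibres to fibres). Thus $\langle \widetilde{Y}, [\widetilde{X}, U]\rangle^M = 0$, and adding the analogous expression for $\langle \nabla^M_{\widetilde{Y}} \widetilde{X}, U\rangle^M$ collapses the sum to $-U\langle \widetilde{X}, \widetilde{Y}\rangle^M$, which vanishes because $\langle \widetilde{X}, \widetilde{Y}\rangle^M$ is constant along fibres. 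Hence $(\nabla^M_{\widetilde{X}} \widetilde{Y})^\perp$ is antisymmetric in $X, Y$. Combining this antisymmetry with the torsion-free identity $\nabla^M_{\widetilde{X}} \widetilde{Y} - \nabla^M_{\widetilde{Y}} \widetilde{X} = [\widetilde{X}, \widetilde{Y}]$ projected vertically gives $(\nabla^M_{\widetilde{X}} \widetilde{Y})^\perp = \tfrac{1}{2}[\widetilde{X}, \widetilde{Y}]^\perp$.

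The main obstacle is establishing the key auxiliary fact that $[\widetilde{X}, U]$ is vertical for basic horizontal $\widetilde{X}$ and vertical $U$. This is not formal from the definitions of horizontal and vertical subspaces alone; it depends on the fibrewise structure of a submersion, since one must argue that the flow of a basic horizontal lift is $\Phi$-related to the flow of $X$ on $Q$ and therefore permutes fibres. Once this lemma is available, the remainder of the argument is a disciplined application of Koszul's formula, metric compatibility, and torsion-freeness, as sketched above.
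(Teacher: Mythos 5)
Your argument is correct and is precisely the classical proof of O'Neill's formula; the paper itself gives no proof, citing O'Neill (1966) and Lang instead, so there is nothing to contrast. The only remark worth adding is that your ``main obstacle'' --- the verticality of $[\widetilde{X},U]$ for basic $\widetilde{X}$ and vertical $U$ --- is immediate from $\Phi$-relatedness: $\widetilde{X}$ is $\Phi$-related to $X$ and $U$ to the zero field, so $d\Phi[\widetilde{X},U]=[X,0]=0$.
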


	We are now ready for the ODE of parallel transport on a Riemannian immersion followed by a Riemannian submersion. 

	\begin{Th}\label{parallel_riem_subm:thm}
	 Suppose that $\Phi_1: M\hookrightarrow \mathbb R^n$ is a Riemannian immersion in Euclidean ($n<\infty$) or Hilbert space $(n=\infty)$, $\Phi_2:M \to Q$ a Riemannian submersion and let $W$ be a vector field on $M$ horizontal along a horizontal geodesic $\gamma(t)$ on $M$. Then $d\Phi_2 W$ is parallel along $\Phi_2\circ \gamma(t)$ if and only if
	\begin{eqnarray*}
	 \dot{W}(t)&=&-\left(\sum_{j\in J} \dot{V}_j(t)\otimes V_j(t)\right)\,W(t) - \sum_{k\in K}d\omega^t_k\big(\dot{\gamma}(t),W(t)\big)\,.
	\end{eqnarray*}
	Here, $\{V_j(t):j\in J\}$ denote an orthonormal smooth base for the normal space $N_{\gamma(t)}M\subset \mathbb R^n$ and $d\omega^t_k$ are the exterior derivatives of an orthonormal and smooth base $\{U_k(t): k \in K\}$ of the vertical space $T_{\gamma(t)}[\gamma(t)] \subset T_{\gamma(t)}M$ for suitable index sets $J$ and $K$.
	\end{Th}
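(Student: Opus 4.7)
The plan is to chain together Theorem \ref{ONEILL_thm} (O'Neill) for $\Phi_2$ with the immersion identity (\ref{imm-cov-der-eq}) for $\Phi_1$. The variable $W$ serves double duty: as a horizontal vector field on $M$ it is the horizontal lift of $d\Phi_2 W$ on $Q$, and as a curve in the ambient space $\mathbb R^n$ it admits the pointwise Euclidean/Hilbert derivative $\dot W(t)$. The goal is to translate ``$d\Phi_2 W$ is parallel along $\Phi_2\circ\gamma$'' into an ODE for $\dot W(t)$ purely in terms of the ambient data $\gamma,W,V_j,U_k$.

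First I would apply Theorem \ref{ONEILL_thm} to $\widetilde X = \dot\gamma$ and $\widetilde Y = W$, both horizontal by hypothesis. This gives
\[
\nabla^M_{\dot\gamma}W \;=\; \widetilde{\nabla^Q_{\dot{(\Phi_2\gamma)}}\,d\Phi_2 W}\;+\;\tfrac12 [\dot\gamma,W]^\perp,
\]
in which the first summand is horizontal and the second vertical. Hence $d\Phi_2 W$ is parallel along $\Phi_2\circ\gamma$ if and only if the horizontal part of $\nabla^M_{\dot\gamma}W$ vanishes, equivalently
\[
\nabla^M_{\dot\gamma}W \;=\;\tfrac12 [\dot\gamma,W]^\perp.
\]
Next I would compute the left-hand side via (\ref{imm-cov-der-eq}): $\nabla^M_{\dot\gamma}W$ is obtained from $\dot W(t)$ by subtracting its normal component along $\{V_j(t)\}_{j\in J}$. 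Since $W\perp V_j$ pointwise, differentiation yields $\langle\dot W,V_j\rangle = -\langle W,\dot V_j\rangle$, so the normal component of $\dot W$ is $-\sum_j(\dot V_j\otimes V_j)W$, exactly as in the proof of Theorem \ref{imm-parallel-th}. Therefore
\[
\nabla^M_{\dot\gamma}W \;=\; \dot W(t)\;+\;\Bigl(\sum_{j\in J}\dot V_j(t)\otimes V_j(t)\Bigr)W(t).
\]

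It remains to rewrite the right-hand side $\tfrac12[\dot\gamma,W]^\perp$ in terms of the exterior derivatives of the vertical coframe dual to $\{U_k(t)\}_{k\in K}$. Using $\omega_k(\cdot)=\langle\cdot,U_k\rangle$ and the Cartan formula $d\omega_k(X,Y)=X\omega_k(Y)-Y\omega_k(X)-\omega_k([X,Y])$, horizontality of $\dot\gamma$ and $W$ kills the first two terms, so $\langle[\dot\gamma,W],U_k\rangle=-d\omega_k(\dot\gamma,W)$ and hence $[\dot\gamma,W]^\perp = -\sum_k d\omega_k(\dot\gamma,W)\,U_k$. Substituting into the identity derived from O'Neill produces precisely the stated ODE (up to the convention absorbing the factor $\tfrac12$ and the vertical basis vectors into the symbol $d\omega^t_k$). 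The main technical subtlety is the last step: one must carefully verify that the $\omega_k$-pairings with $\dot\gamma$ and $W$ vanish at every $t$, i.e.\ that horizontality of $W$ is preserved by the hypotheses; once horizontality is in hand, the Cartan identity gives the bracket formula cleanly. A secondary subtlety is justifying that O'Neill applies pointwise along $\gamma$, which is standard since both sides depend only on the 1-jet of $\widetilde X,\widetilde Y$ at each point.
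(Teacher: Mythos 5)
Your proposal is correct and follows essentially the same route as the paper: apply O'Neill's formula (Theorem \ref{ONEILL_thm}) to the horizontal fields $\dot\gamma$ and $W$ to reduce parallelism of $d\Phi_2 W$ to $\nabla^M_{\dot\gamma}W=\tfrac12[\dot\gamma,W]^\perp$, compute $\nabla^M_{\dot\gamma}W$ from the ambient derivative via (\ref{imm-cov-der-eq}) as in Theorem \ref{imm-parallel-th}, and convert the vertical bracket term with the Cartan identity and horizontality of $\dot\gamma,W$. The only difference is notational: the paper's cited Cartan formula carries the factor $2d\omega_k$, which makes the coefficient come out exactly as stated, whereas you flag the same factor as a convention to be absorbed.
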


	\begin{proof}
	 Suppose that we have a vector field $X\in T(Q)$ with horizontal lift $\widetilde{X}(t) =\dot{\gamma}(t)$. If $\Phi_2\circ \gamma$ is geodesic and $d\Phi_2 W$ parallel with horizontal lift $W$, Theorem \ref{ONEILL_thm} yields
	\begin{eqnarray}\label{O'N:eq}
	 \nabla_{\dot{\gamma}}^MW&=&\frac{1}{2} \sum_{k\in K}\langle [\widetilde{X},W],U_k\rangle]U_k ~=~-\sum_{k\in K} d\omega_k(\dot{\gamma},W)
	\end{eqnarray}
	making use of the well known (e.g. \citet[p.126/7]{L99})
	$$\langle [\widetilde{X},W],U_k\rangle = \widetilde{X}\langle U_k,W\rangle - W\langle U_k,\widetilde{X}\rangle - 2d\omega_k(\widetilde{X},W)$$
	with the exterior derivative $d\omega_k$ of the one-form $\omega_k$ dual to $U_k$. On the other hand, since $d\Phi_1:T_pM\to T_pM \subset \mathbb R^n$ is the identity, formula (\ref{imm-cov-der-eq}) yields 
	\begin{eqnarray}\label{imm:eq}\nonumber
	 \nabla_{\dot{\gamma}}W^M &=& \dot{W}(t) - \left(\sum_{j\in J} {V}_j(t)\otimes V_j(t)\right)\,\dot{W}(t)\\&=&  \dot{W}(t)+\left(\sum_{j\in J} \dot{V}_j(t)\otimes V_j(t)\right)\,W(t)
	\end{eqnarray}
	as in the proof of Theorem \ref{imm-parallel-th} . Putting together (\ref{O'N:eq}) and (\ref{imm:eq}) gives the assertion of the Theorem.
	\end{proof}

%
%
 
\section{PT for Closed 2D Contours}\label{ZR-scn}

	We define the two \emph{shape spaces of closed 2D constant-speed contours based on angular direction} as introduced by \cite{ZR72} in the geometric formulation of \cite{KSMJ04}. 

	Suppose that $z:[0,2\pi] \to \mathbb C, s\mapsto z(s)$ 
	is a constant-speed parameterization of a  smooth, closed, curve of length $L$ winding once counterclockwise around each interior point. Let 
	\begin{eqnarray} \nonumber 
	\theta(s) &=&\arg\big(z'(s)\big) -\arg\big(z'(0)\big) -s\,,\mbox{ with}\\ \label{ZR-ode}
	\dot{z}(s) &=& 
	\frac{L}{2\pi} \,e^{i\big(\theta(s)+\arg(z'(0)) +s\big)}\,.
	\end{eqnarray}
	Obviously, the \emph{Zahn-Roskies shape} (ZR-shape) $\theta$ is invariant under translation, scaling and rotation $z(s) \to c + \lambda e^{i\psi} z(s)$. Moreover, subtracting $s$ (the curves to be modelled wind once around their interior) norms $\theta$ such that it is $2\pi$-periodic. Vice versa, from every  converging Fourier series 
	an a.e. differentiable constant-speed 2D curve  can be reconstructed by integrating (\ref{ZR-ode}). This curve is unique modulo translation, scaling and rotation. Thus a linear subspace of the Hilbert space $\lw^2$ of Fourier series is the \emph{ZR--pre--shape space}
	\begin{eqnarray*}
	S_{ZR} ~:=~\Big\{ \theta(s) = \sum_{n=0}^{\infty} \big(x_n\cos(ns)+y_n\sin(ns)\big):\\ 
	\|\theta\|^2-x_0^2= \frac{1}{2}\,\sum_{n=1}^{\infty}(x_n^2+y_n^2)<\infty\\ x_0 = -\sum_{n=1}^\infty x_n,~y_0=0\Big\}\,. 
	\end{eqnarray*}
	As usual, $2\pi \langle\theta,\eta\rangle := \int_0^{2\pi} \theta(s)\eta(s)\,ds$ and $\|\theta\|^2 := \langle \theta,\theta\rangle$. The tangent spaces $T_{\theta}S_{ZR}$ are identified with $S_{ZR}\subset \lw^2$.  
	Since the curves in question are closed, we have with the non-linear mapping
	\begin{eqnarray*}
	\begin{array}{rcl}
	\Psi : \lw^2 &\to& \mathbb C\\
		\theta &\mapsto&  \int_0^{2\pi} e^{i\big(\theta(s)+s\big)}\,dt  	
	  \end{array}
	\end{eqnarray*}
	that the \emph{ZR--shape space} is the implicit sub-manifold 
	\begin{eqnarray*}
	\Sigma_{ZR} ~:=~ \left\{ \theta \in S_{ZR}: \Psi(\theta) =0\right\}\,.&&
	\end{eqnarray*}
	Obviously, the ZR-shapes of closed not self-intersecting contours form an open subset containing the origin, which corresponds to the shape of the circle.

	 Additionally considering closed curves invariant under change of initial point $z(s) \to z(s+s_0)$ (e.g. amorphous curves with no preassigned initial point) by defining this action of the unit circle $S^1\ni s_0$ on $\Sigma_{ZR}$ the \emph{invariant ZR--shape space} 
	\begin{eqnarray}\label{I-ZR-shape-def}
	\Sigma^{I}_{ZR} &:=&\Big(\Sigma_{ZR}\setminus\{0\}\Big)/S^1\,
	\end{eqnarray}
	is obtained.

	Since rotation and parameter shift are equivalent for circles, the corresponding invariant ZR--shape is thus a singularity of $\Sigma^{I}_{ZR}$, in fact its only singularity. 

\paragraph{Parallel Transport on $\Sigma_{ZR}$ and $\Sigma^{I}_{ZR}$}

	Geodesics on $\Sigma_{ZR}$ as well as on $\Sigma^{I}_{ZR}$ between two given points can be computed via a technique called \emph{geodesic shooting}, cf. \cite{MTY06} as well as \cite{KSMJ04}, or much faster via a	 variational approach \cite{SCC06}. Since $\Sigma_{ZR}\hookrightarrow \lw^2$ is a Riemannian immersion with the global implicit definition $\Psi=0$ we have that the normal space in $S_{ZR}$ at $\theta\in\Sigma_{ZR}$ is spanned by $V_1(\theta)=s\mapsto\cos\left(\theta(s) +s\right)$ and $V_2(\theta)=s\mapsto \sin\big(\theta(s)+s\big)$. Orthogonalization yields the base
	$$ W_1 := \frac{V_1}{\|V_1\|},~~W_2 := \frac{V_2 - \langle V_2,W_1\rangle\,W_1}{\|V_2 - \langle V_2,W_1\rangle\,W_1\|}\,.$$
	As a consequence of Theorem  \ref{imm-parallel-th} we have
	\begin{Th}\label{ZR-Sigma-par-thm} A vector-field $W(t)$ in $\Sigma_{ZR}$ is parallel along a geodesic $\gamma$ in $\Sigma_{ZR}$ if and only if it satisfies the linear differential equation
	{\footnotesize \begin{eqnarray}\label{parallel-eq-ZR}
 	 \lefteqn{\dot{W}(t)~=~}\\\nonumber &-\Big(\frac{d}{dt}\,W_1\big({\gamma}(t)\big)\otimes W_1\big({\gamma}(t)\big) + \frac{d}{dt}\,W_2\big({\gamma}(t)\big)\otimes W_2\big({\gamma}(t))\Big) ~{W}(t)\,.
 	\end{eqnarray}}
	\end{Th}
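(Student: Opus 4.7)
The plan is a direct invocation of Theorem \ref{imm-parallel-th} applied to the Riemannian immersion $\Sigma_{ZR} \hookrightarrow \lw^2$. Since $\Sigma_{ZR}$ is cut out globally by the single complex equation $\Psi = 0$, the normal bundle $N_\theta \Sigma_{ZR}$ has real rank at most two, and I would first verify the claim made just before the theorem that it is spanned by $V_1(\theta) = \cos(\theta(\cdot)+\cdot)$ and $V_2(\theta) = \sin(\theta(\cdot)+\cdot)$.

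To this end I would compute the Fr\'echet derivative of $\Psi$ at $\theta$,
$$d\Psi_\theta(h) \;=\; i\int_0^{2\pi} h(s)\,e^{i(\theta(s)+s)}\,ds \;=\; 2\pi\bigl(-\langle h, V_2\rangle + i\langle h, V_1\rangle\bigr),$$
so that $\ker d\Psi_\theta$ is precisely the $L^2$-orthogonal complement of $\operatorname{span}\{V_1, V_2\}$. Taking adjoints identifies $V_1$ and $V_2$ as the gradient directions of the real and imaginary parts of $\Psi$, hence as a (generically non-orthonormal) basis of the normal bundle.

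The Gram--Schmidt formulas written immediately before the theorem then produce the orthonormal pair $W_1, W_2$. Substituting these into the conclusion of Theorem \ref{imm-parallel-th} with $J = \{1,2\}$ and with the orthonormal normal frame taken as $W_j(\gamma(t))$, $j=1,2$, yields the displayed ODE (\ref{parallel-eq-ZR}) verbatim.

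The only subtlety I would expect to address carefully is bookkeeping around the ambient space: strictly speaking the immersion lies in $S_{ZR}$, a closed linear subspace of $\lw^2$ encoding the constraints on the zeroth Fourier coefficients, so one should either check that $V_1, V_2$ already satisfy these constraints at each $\theta$ along the geodesic or project them onto $S_{ZR}$ before orthonormalizing. Once this is settled, the argument is entirely formal and reduces to substitution into Theorem \ref{imm-parallel-th}; no separate analysis of the infinite-dimensional situation is needed, since that theorem was already stated for Hilbert target.
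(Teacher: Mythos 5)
Your proposal is correct and takes essentially the same route as the paper, which derives the theorem by substituting the Gram--Schmidt orthonormalization $W_1,W_2$ of the normal frame $V_1,V_2$ directly into Theorem \ref{imm-parallel-th}. Your computation of $d\Psi_\theta$ identifying $V_1,V_2$ as spanning the normal space (and your remark about the ambient space $S_{ZR}$) fills in a claim the paper asserts without proof, but it does not alter the argument.
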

	In practice, (\ref{parallel-eq-ZR}) can be solved numerically by orthogonally projecting to $T_{\gamma(t)}\Sigma_{ZR}$ in every iteration step. 

	We now turn to the submersion (\ref{I-ZR-shape-def}). The vertical space at 
	$$\theta(s) = x_0+\sum_{n=1}^{\infty} \big(x_n \cos(ns) + y_n \sin(ns)\big) \in \Sigma_{ZR}$$
	is spanned (if convergent) by the single vertical unit length direction 
	$$\frac{\theta'(s)}{\|\theta'(s)\|} = \frac{\sum_{n=1}^{\infty} n\big(-x_n \partial_{y_n} + y_n \partial_{x_n}\big)}{\sqrt{\sum_{n=1}^{\infty} n^2(x_n^2+y_n^2)}}~\sqrt{2}\,.$$ 
	The exterior derivative of its dual is hence
	\begin{eqnarray}\label{ext-der-omeg-ZR}
	 	\left.\begin{array}{rcl}
	\frac{d\omega}{\sqrt{2}}&=& -2\,\frac{\sum_{n=1}^{\infty}ndx^n \wedge dy^n}{\sqrt{\sum_{n=1}^{\infty} n^2(x_n^2+y_n^2)}}\,\\&&~~ -\,\frac{\sum_{n\neq n'}y_nx_{n'}n n'(n'dx^{n'} \wedge dx^{n}-ndy^n \wedge dy^{n'})}{\sqrt{\sum_{n=1}^{\infty} n^2(x_n^2+y_n^2)}^{~3}}\\
	&&-\,\frac{\sum_{n,n'}nn'(ny_ny_{n'}+n'x_nx_{n'}) dy^{n} \wedge dx^{n'}}{\sqrt{\sum_{n=1}^{\infty} n^2(x_n^2+y_n^2)}^{~3}}\end{array}\right\}\,.
	\end{eqnarray}
	In conjunction with Theorem \ref{imm-parallel-th}, Theorem \ref{parallel_riem_subm:thm} and Theorem \ref{ZR-Sigma-par-thm} one obtains after a tedious computation

	\begin{Th}\label{ZR1-Sigma-par-thm}
	The vector-field 
	$$ W(t) = u_0(t)\partial_{x_0}+\sum_{n=1}^\infty \big(u_n(t)\partial_{x_n} + v_n(t)\partial_{y_n}\big)$$
	is a horizontal lift to the top space $\Sigma_{ZR}$ of the bottom space  parallel transport along a geodesic in $\Sigma^{I}_{ZR}$
	$$\gamma_s(t) = x_0(t)+\sum_{n=1}^\infty \big(x_n(t)\cos(ns) + y_n(t)\sin(ns)\big)$$
	horizontal in $\Sigma_{ZR}$ if and only if it satisfies the linear differential equation
	{\footnotesize\begin{eqnarray*}
 	 \lefteqn{\dot{W}(t)~=~}\\ 
	&&-\Big(\frac{d}{dt}\,W_1\big({\gamma}_s(t)\big)\otimes W_1\big({\gamma}_s(t)\big) + \frac{d}{dt}\,W_2\big({\gamma}_s(t)\big)\otimes W_2\big({\gamma}_s(t))\Big) ~{W}(t)
	\\ 
	&&+\,\frac{\gamma'_s(t)}{2\|\gamma'_s(t)\|^4}\Big(\big\langle \gamma'_s(t)\otimes\gamma''_s(t), W(t)\otimes \dot{\gamma}_s(t)\big\rangle\\
	 &&\hspace{3cm} - \big\langle \gamma'_s(t)\otimes\dot{\gamma}_s(t), W(t)\otimes {\gamma}''_s(t)\big\rangle\Big)\\
	&&-\,\frac{\gamma'_s(t)}{\|\gamma'_s(t)\|^2} \big\langle \dot{\gamma}'_s(t),W(t)\rangle
	\,.
	\end{eqnarray*}}with the derivatives defined as
	$$\begin{array}{rcl}
  	\gamma'_s(t) &=& \sum_{n=1}^{\infty} n\big(-x_n(t) \partial_{y_n} + y_n(t) \partial_{x_n}\big)\,,\\ \dot{\gamma}_s(t) &=& \dot{x}_0(t)\partial_{x_0} + \sum_{n=1}^{\infty} \big(\dot{x}_n(t) \partial_{x_n} + \dot{y}_n(t) \partial_{y_n}\big)\,,\\
	\dot{\gamma}'_s(t)&=& \sum_{n=1}^{\infty} n\big(-\dot{x}_n(t) \partial_{y_n} + \dot{y}_n(t)
	 \partial_{x_n}\big)\,,\\
  	\gamma''_s(t) &=& -\sum_{n=1}^{\infty} n^2\big(x_n(t) \partial_{x_n} + y_n(t) \partial_{y_n}\big)\,,	
	\end{array}$$
	if convergent, and the inner product defined by 
	$$\langle E_i\otimes E_j,E_k\otimes E_l\rangle ~:=~ \delta_{(i,j),(k,l)}$$
	for an orthogonal system $E_j$ and index set $J\ni j$.    
	\end{Th}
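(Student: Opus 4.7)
The plan is to apply Theorem \ref{parallel_riem_subm:thm} directly to the situation at hand: the Riemannian immersion $\Phi_1 = \id : \Sigma_{ZR} \hookrightarrow \lw^2$ followed by the Riemannian submersion $\Phi_2 : \Sigma_{ZR} \to \Sigma^I_{ZR}$ of quotienting by the $S^1$-action of reparametrization. Everything in the statement — horizontality of $W$ and of $\dot{\gamma}_s$, the geodesic condition, and the existence of a single smooth vertical direction away from the circle-singularity — ensures the hypotheses of Theorem \ref{parallel_riem_subm:thm} are met along the given $\gamma_s(t)$.

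The first line of the ODE in the statement is exactly the immersion contribution obtained in Theorem \ref{ZR-Sigma-par-thm}: with the orthonormal normal frame $\{W_1, W_2\}$ of $\Sigma_{ZR}\hookrightarrow \lw^2$, the sum $\sum_{j\in J}\dot{V}_j\otimes V_j$ of Theorem \ref{parallel_riem_subm:thm} becomes $\frac{d}{dt}W_1(\gamma_s(t))\otimes W_1(\gamma_s(t)) + \frac{d}{dt}W_2(\gamma_s(t))\otimes W_2(\gamma_s(t))$. So no new work is needed for that part.

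The remaining task is to evaluate the submersion term $-d\omega(\dot{\gamma}_s(t), W(t))$, where $\omega$ is the one-form dual to the single unit vertical vector $U(t) = \sqrt{2}\,\gamma'_s(t)/\|\gamma'_s(t)\|$, and then show the result equals the last two groups of summands in the stated ODE. The expression for $d\omega/\sqrt{2}$ is already recorded in (\ref{ext-der-omeg-ZR}); plugging $\dot{\gamma}_s(t)$ and $W(t)$ into each of its three pieces and contracting termwise produces three sums over Fourier indices. The crucial step is to recognize these sums as the natural couplings of the four $s$- and $t$-derivatives of $\gamma_s(t)$ against $W(t)$: the $dx^n\wedge dy^n$ piece yields $\langle \dot{\gamma}'_s(t), W(t)\rangle$ divided by $\|\gamma'_s(t)\|$, and the two mixed pieces assemble, after regrouping cross indices $n\neq n'$ and symmetric indices $n=n'$, into the antisymmetric bracket $\langle \gamma'_s\otimes\gamma''_s, W\otimes \dot{\gamma}_s\rangle - \langle \gamma'_s\otimes\dot{\gamma}_s, W\otimes \gamma''_s\rangle$ divided by $\|\gamma'_s(t)\|^3$. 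Finally, multiplying $-d\omega(\dot{\gamma}_s, W)$ by the horizontal lift direction associated with $U$, which returns the factor $\gamma'_s(t)/\|\gamma'_s(t)\|$ (together with the $\sqrt{2}$ bookkeeping that produces the $1/2$ in the $\|\gamma'_s\|^4$ denominator), yields exactly the second and third blocks displayed in the theorem.

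The main obstacle is entirely bookkeeping: carefully matching signs and Fourier index conventions between (\ref{ext-der-omeg-ZR}) and the symbolic $\gamma''_s, \dot{\gamma}'_s$ introduced in the statement, and verifying that the apparently asymmetric $n\neq n'$ sums combine with the diagonal $n=n'$ contributions into the clean tensor-contraction form $\langle E_i\otimes E_j, E_k\otimes E_l\rangle = \delta_{(i,j),(k,l)}$. Once the normalization factor from $U$ versus $\omega$ is tracked through consistently, the identification is forced by uniqueness of the decomposition on a basis, and Theorem \ref{parallel_riem_subm:thm} then delivers the claim.
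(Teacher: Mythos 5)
Your proposal follows exactly the route the paper indicates: it derives the theorem by specializing Theorem \ref{parallel_riem_subm:thm} to the immersion $\Sigma_{ZR}\hookrightarrow\lw^2$ (whose contribution is Theorem \ref{ZR-Sigma-par-thm}) and the submersion $\Sigma_{ZR}\to\Sigma^I_{ZR}$ with the single vertical direction $\gamma'_s/\|\gamma'_s\|$ and its exterior derivative (\ref{ext-der-omeg-ZR}), followed by the same ``tedious computation'' the paper invokes without displaying. Aside from a slight terminological slip (the scalar $-d\omega(\dot\gamma_s,W)$ multiplies the \emph{vertical} unit vector, not a ``horizontal lift direction''), this is essentially the paper's own argument.
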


	In practice, convergence of the series for the derivates is not an issue since as remarked earlier, computations are carried out using only finitely many Fourier coefficients.

\section{PT for Kendall's Shape Spaces}\label{Kendall_scn}
 	Kendall's landmark based similarity shape analysis is based on \emph{configurations} consisting of $k\geq m+1$ labelled vertices
        in $\mathbb R^m$  called \emph{landmarks} that do not all
        coincide. A configuration 
	$$x = (x^1,\ldots,x^k) = (x^{ij})_{1\leq i\leq m,1\leq j\leq k}$$ 
	is thus an element of the space $M(m,k)$ of matrices with $k$ columns, each an $m$-dimensional landmark vector. Disregarding center and size, these configurations are mapped to the \emph{pre-shape sphere} 
      $$ M = S^{k}_m := \{p \in M(m,k-1) \colon \|p\| = 1\}\,,$$
	where  $\|p\|^2 =  \langle p, p \rangle $ and  $ \langle p, v \rangle := \tr(pv^T)$ is the standard Euclidean product. 
      	This can
		be done by, say,  multiplying by a sub-Helmert matrix,
		cf. \cite{DM98} for a detailed discussion of this and other
		normalization methods. The canonical Riemannian immersion  $S_m^k\hookrightarrow M(m,k-1)$ comes with a global implicit definition $\Psi(x) = \|x\|-1 =0$ giving rise to a single normal field. Hence, using (\ref{Eucl_nabl_eq}) and (\ref{imm-cov-der-eq}), the covariant derivatives along a curve $\gamma$ on $S_m^k$ relate as
		\begin{eqnarray}\label{spher-cov-der}
		 \nabla^{S_m^k}_{\dot{\gamma}(t)} W(t) &=& \dot{W}(t) - \big\langle \dot{W}(t),\gamma(t)\big\rangle \, \gamma(t)\,
		\end{eqnarray}
	As a consequence of (\ref{spher-cov-der}) and (\ref{parallel_eq}), unit-speed geodesics on $S_m^k$ are great circles of form $\gamma(t) = x\cos t + v\sin t$ with $x,v\in S_m^k$ and $\langle x,v\rangle=0$.

	In order to filter out rotation information define the regular part $(S^{k}_m)^* := \{x\in S_k^k: \rank(x) > m-2\}$ (an open dense subset of $S_m^k$) and a
      	smooth and free action of $SO(m)$ by the usual matrix multiplication
        $(S^{k}_m)^* \stackrel{g}{\to} (S^{k}_m)^* : p \mapsto gp$
        for $g \in SO(m)$. The
        orbit 
        $\pi(p) = \{gp \colon g \in SO(m)\}$ is the \emph{Kendall shape} of $p \in
        S^{k}_m$ and the quotient
	\begin{eqnarray}\label{def-shape-space}
        \pi : (S^{k}_m)^* ~\to~ (\Sigma^{k}_m)^* &:=& (S^{k}_m)^*/SO(m)
	\end{eqnarray}
	is called \emph{Kendall's shape space}. Note that projecting from the entire pre-shape sphere $S_m^k$ would have led to a non-manifold quotient ($m\geq 3$), which is usually called Kendall's shape space. We consider here only the regular part such that (\ref{def-shape-space}) is a Riemannian submersion. With the orthogonal decomposition $\frak{gl}(m) = \frak{o}(m) \oplus SM(m)$ of the Lie algebra $\frak{gl}(m) = M(m,m)$, the Lie algebra $\frak{o}(m)$ of skew-symmetric matrices in $\frak{gl}(m)$ and the vector-space of symmetric matrices $SM(m)$ in $\frak{gl}(m)$ we have the following orthogonal tangent space decomposition for $x\in S_m^k$, 
	cf. \citet[p.109]{KBCL99}. 
	\begin{eqnarray}\label{soms}
		&\begin{array}{rcccc}
		\frak{gl}(m) &=& {\frak o}(m) &\oplus& SM(m)\\
		&                     & \downarrow \cdot x &&\uparrow \cdot x^T\\
		T_xS^k_m\oplus N_x S_m^k&=&T_x\pi(x)& 
		\oplus &\overbrace{ H_xS^k_m\oplus N_x S_m^k}
		\end{array}
	\end{eqnarray}
	For $x\in (S_m^k)^*$ both mappings are surjective and $H_xS_m^k = T_{\pi(x)}(\Sigma_m^k)^*$.
	In order to compute the horizontal lift of bottom space parallel transport as in Theorem \ref{ZR1-Sigma-par-thm}, we need an orthonormal base for the $\big(m(m-1)/2\big)$-dimensional vertical space $T_x\pi(x)$ and the exterior derivative of its duals. From (\ref{soms}) we have at once a (in general  not-orthogonal) base $\{e_{ij}x : 1\leq i<j\leq m\}$ with base system	
	{\footnotesize $$e_{ij} = \left(\varepsilon^{\alpha,\beta}\right)_{1\leq \alpha, \beta\leq m}\mbox{ with } \varepsilon^{\alpha,\beta} =\left\{\begin{array}{lcl}1&\mbox{ for } \alpha = i,\beta=j\\
	-1 &\mbox{ for } \alpha = j,\beta=i\\
	0&\mbox{ else }\end{array}\right.$$}of $\frak{o}(m)$. From the former obtain an o.g. base system $\{V_{ij}(x): 1\leq i<j\leq m\}$ of $T_x\pi(x)$ through Gram-Schmidt orthogonalization and let  $\omega_{ij}(x)$ be the one-form dual to $V_{ij}(x)$ ($1\leq i<j\leq m$). For $m=2$ there is a single vertical unit-direction $V_{12}(x)$. For $m>2$, however, $d\omega_{12}(x)$ is about as complicated as (\ref{ext-der-omeg-ZR}), and even for $m=3$, the other two derivatives and their application to vector-fields result in expressions too lengthy to be written down, cf. also the rather complicated examples 
	in \cite{L03}. Using a computer algebra program, however, these expressions and their respective values can be easily computed by symbolic differentiation. Hence, 
	Theorem \ref{parallel_riem_subm:thm} 
	and (\ref{spher-cov-der}) 
	yield the following Theorem. The special case $m=2$ is taken from \citet[Theorem A.6]{HHM09}, cf. also \citet[Theorem 2]{L03}.
	\begin{Th}\label{Parallel-transp-Kendall}
	A vector-field 
	$ W(t)$ 
	is a horizontal lift to the top space of the bottom  space $(\Sigma^k_m)^*$ parallel transport along a geodesic 
	$\gamma(t) = x\cos t +v\sin t$
	horizontal in $(S_m^k)^*$ if and only if it satisfies the ODE
	\begin{eqnarray*}
 	\lefteqn{\dot{W}(t) ~=~ 
 	\left\langle \dot{W}(t), \gamma(t)\right\rangle\, \gamma(t)}
	 \\ &&\hspace{0cm}-\sum_{1\leq i<j\leq m}
	d\omega_{ij}\big(\gamma(t)\big)\Big(\dot{\gamma}(t),W(t)\Big)\, V_{ij}\big(\gamma(t)\big)\,.	
	\end{eqnarray*}
	For $m=2$ this ODE has the explicit solution
	\begin{eqnarray*} W(t) ~=~ W(0) - \Big(\big\langle W(0),v\big\rangle\, v + \big\langle W(0),e_{12}v\big\rangle\, e_{12}v\Big) \\+ \big\langle W(0),v\big\rangle\, \dot{\gamma}_{z,v}(t) + \big\langle W(0),e_{12}v\big\rangle\, e_{1,2}\dot{\gamma}_{z,v}(t)\,.\end{eqnarray*}
	\end{Th}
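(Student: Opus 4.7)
The plan is to apply Theorem \ref{parallel_riem_subm:thm} to the composite of the canonical Riemannian immersion $(S^{k}_{m})^{*}\hookrightarrow M(m,k-1)$ with the Riemannian submersion (\ref{def-shape-space}). Two data must be identified. First, the normal bundle of the pre-shape sphere at $\gamma(t)$ is one-dimensional, spanned by the outward unit normal $V_{1}(t):=\gamma(t)$, so $\dot V_{1}(t)\otimes V_{1}(t)=\dot\gamma(t)\otimes\gamma(t)$ and the immersion contribution becomes $-(\dot\gamma\otimes\gamma)W=-\langle W,\dot\gamma\rangle\,\gamma$; differentiating the tangency relation $\langle W(t),\gamma(t)\rangle\equiv 0$ converts this term into $\langle\dot W(t),\gamma(t)\rangle\,\gamma(t)$ in accordance with (\ref{spher-cov-der}). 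Second, the orthonormal vertical frame $\{V_{ij}(\gamma(t))\}_{1\le i<j\le m}$ and its duals $\omega_{ij}$ obtained by Gram--Schmidt on $\{e_{ij}\gamma(t)\}$ above (\ref{soms}) plug directly into the submersion term of Theorem \ref{parallel_riem_subm:thm}, giving the remaining sum $-\sum_{i<j} d\omega_{ij}(\gamma(t))(\dot\gamma(t),W(t))\,V_{ij}(\gamma(t))$ and therewith the general ODE.

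For the explicit $m=2$ solution I would verify by direct substitution. Here $\frak{o}(2)=\mathbb{R}\,e_{12}$ is one-dimensional and, since $e_{12}^{2}=-I$ and $e_{12}^{T}=-e_{12}$, $V_{12}(p)=e_{12}p$ is already unit-length while the four vectors $x, v, e_{12}x, e_{12}v$ form an orthonormal system under the tangency $\langle x,v\rangle=0$ and the horizontality $\langle v, e_{12}x\rangle=0$. Decomposing $W(0)=W^{\perp}+a\,v+b\,e_{12}v$ with $a=\langle W(0),v\rangle$, $b=\langle W(0),e_{12}v\rangle$ and $W^{\perp}\perp\mathrm{span}\{x,v,e_{12}x,e_{12}v\}$, the proposed $W(t)$ reads $W^{\perp}+a\,\dot\gamma(t)+b\,e_{12}\dot\gamma(t)$. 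Expanding $\gamma(t)$ and $e_{12}\gamma(t)$ in the orthonormal frame proves horizontality at all $t$, and $\ddot\gamma=-\gamma$ together with $\langle \dot W,\gamma\rangle\gamma=-a\gamma$ reduces the general ODE to the single scalar identity $d\omega_{12}(\gamma(t))(\dot\gamma(t),W(t))=b$.

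This last identity is the main obstacle, as it amounts to the curvature of the Hopf $U(1)$-connection on $(S_{2}^{k})^{*}\to(\Sigma_{2}^{k})^{*}$. I would extract it from the formula $2\,d\omega_{12}(A,B)=A\,\omega_{12}(B)-B\,\omega_{12}(A)-\omega_{12}([A,B])$ used in the proof of Theorem \ref{parallel_riem_subm:thm}, applied to horizontal extensions of $\dot\gamma$ and $W$ together with the explicit form $\omega_{12}(X)=\langle e_{12}p,X\rangle$. After unwinding the ambient bracket on $S^{2k-3}\subset\mathbb{C}^{k-1}$ this delivers $d\omega_{12}(\gamma)(\dot\gamma,W)=\langle e_{12}\dot\gamma,W\rangle$, and the required $\langle e_{12}\dot\gamma(t),W(t)\rangle=b$ then follows from $e_{12}W^{\perp}\perp\mathrm{span}\{x,v\}\ni\dot\gamma(t)$ and $\langle e_{12}\dot\gamma,\dot\gamma\rangle=0$. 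No analogous closed-form reduction is to be expected for $m\ge 3$, where $d\omega_{ij}$ is as intricate as (\ref{ext-der-omeg-ZR}) and only numerical integration of the ODE is practical.
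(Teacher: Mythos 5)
Your proposal is correct and follows the same route the paper indicates: the general ODE comes from feeding the pre-shape sphere's single unit normal field $V_1(t)=\gamma(t)$ and the Gram--Schmidt vertical frame $\{V_{ij}(\gamma(t))\}$ into Theorem \ref{parallel_riem_subm:thm}, with the tangency relation $\langle W(t),\gamma(t)\rangle\equiv 0$ converting the immersion term $-\langle W,\dot\gamma\rangle\,\gamma$ into $\langle\dot W,\gamma\rangle\,\gamma$ exactly as (\ref{spher-cov-der}) prescribes. The only divergence is that for $m=2$ the paper simply cites \citet[Theorem A.6]{HHM09}, whereas you verify the explicit solution directly via the identity $d\omega_{12}(\gamma)(\dot\gamma,W)=\langle e_{12}\dot\gamma,W\rangle$ and the orthonormality of $\{x,v,e_{12}x,e_{12}v\}$ --- a correct, self-contained substitute for that citation.
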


	\begin{figure}[h!]
 	\begin{minipage}[t]{0.45\textwidth}
	 \subfigure[{\it From rectangle $\sigma_1$ to the regular hexagon $\sigma_3$.}]{\label{Def_R1_H_ZR_fig} 
	\includegraphics[angle=-90,width=0.8\textwidth]{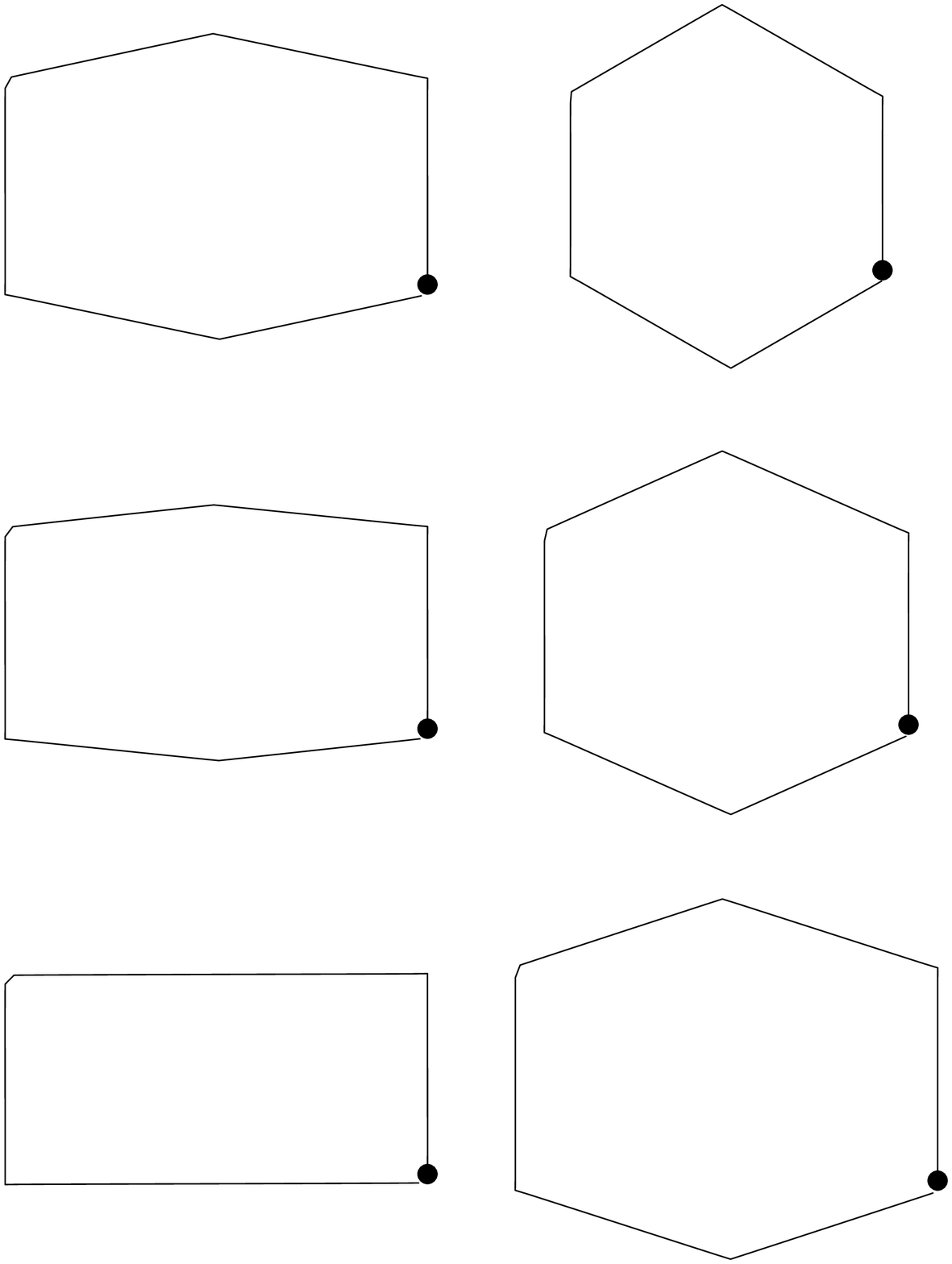}}
	 \subfigure[{\it From rectangle $\sigma_1$ to rectangle $\sigma_2$.}]{\label{Def_R1_R2_ZR_fig} 
	\includegraphics[angle=-90,width=0.8\textwidth]{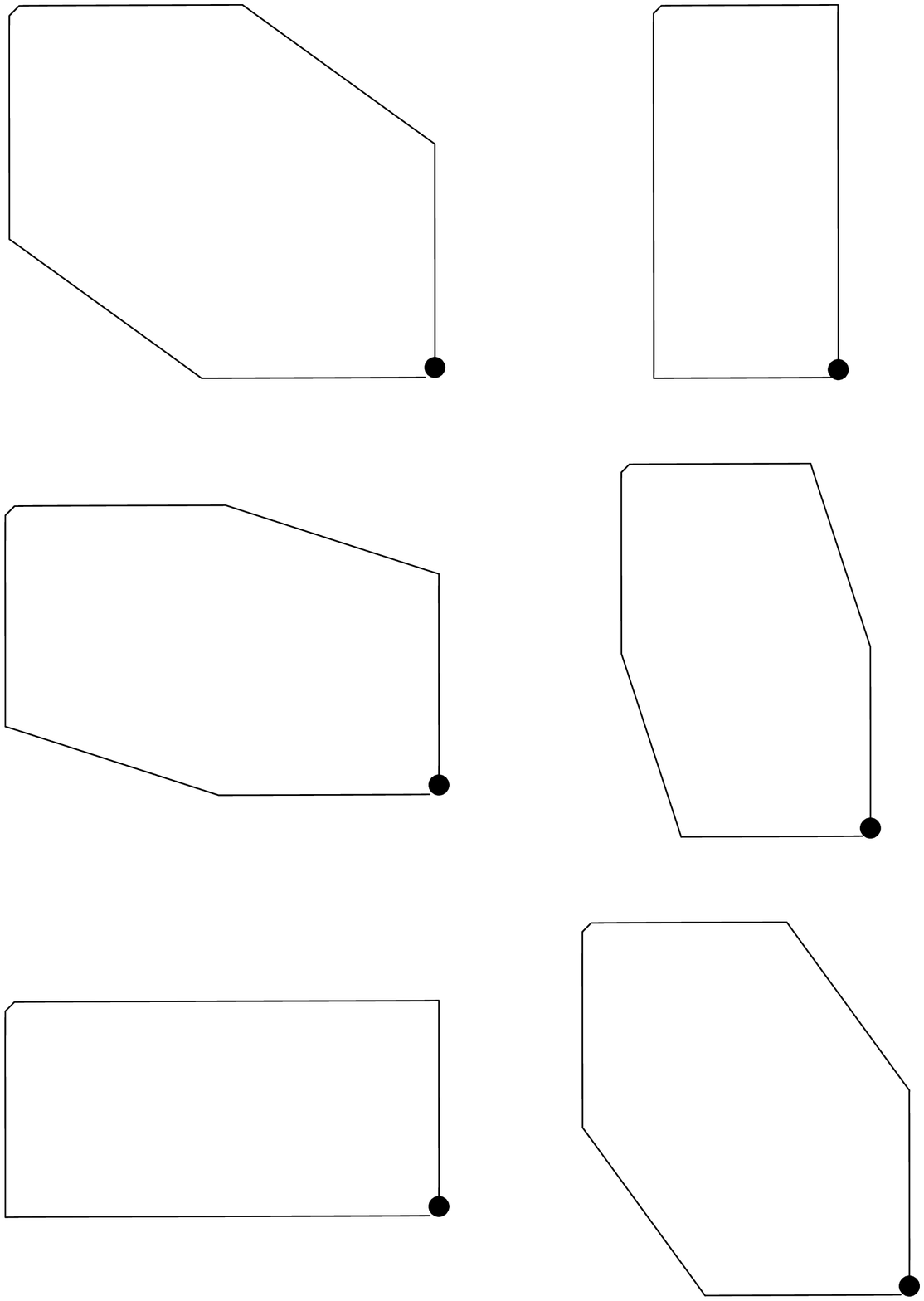}}
	 \subfigure[{\it From $\sigma_2$ to the parallel transplant of $\sigma_3$ from Figure \ref{Def_R1_H_ZR_fig}.}]{\label{Def_R2H_para_ZR_fig} 
	\includegraphics[angle=-90,width=0.8\textwidth]{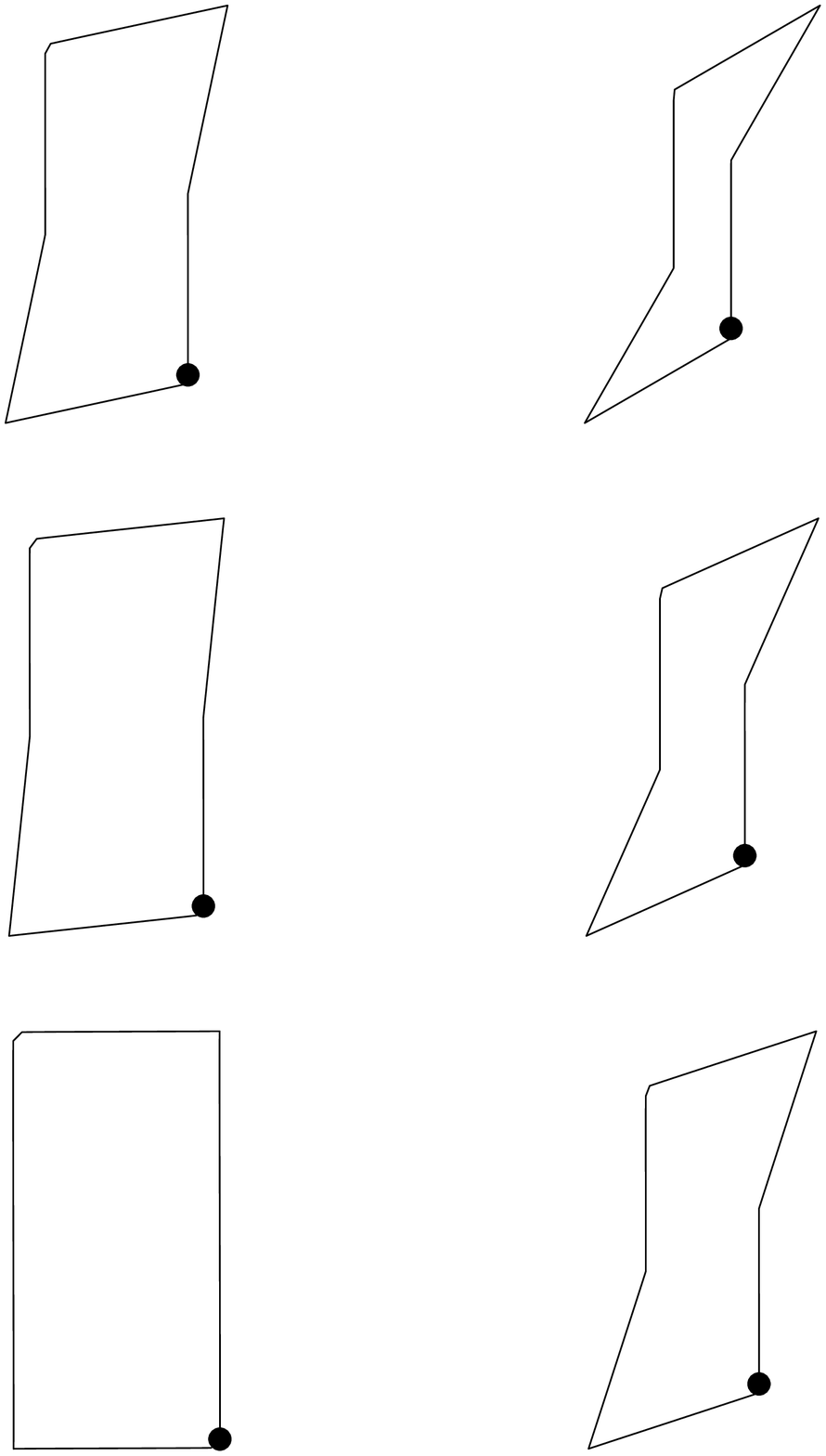}}
	\caption{\it Equidistant deformation along geodesics in $\Sigma_{ZR^2}$. The bullet marks the pre-assigned initial point. Figure \ref{Def_R2H_para_ZR_fig} depicts the parallel transplant to $\sigma_2$ of the geodesic from Figure \ref{Def_R1_H_ZR_fig} along the geodesic depicted in Figure \ref{Def_R1_R2_ZR_fig}.\label{ZR_Hex_fig}}
         \end{minipage}
	\begin{minipage}[t]{0.05\textwidth}\hfill\end{minipage}
	\begin{minipage}[t]{0.45\textwidth}
	 \subfigure[{\it From rectangle $\sigma_1$ to the regular hexagon $\sigma_3$.}]{\label{Def_R1_H_K_fig} 
	\includegraphics[angle=-90,width=0.8\textwidth]{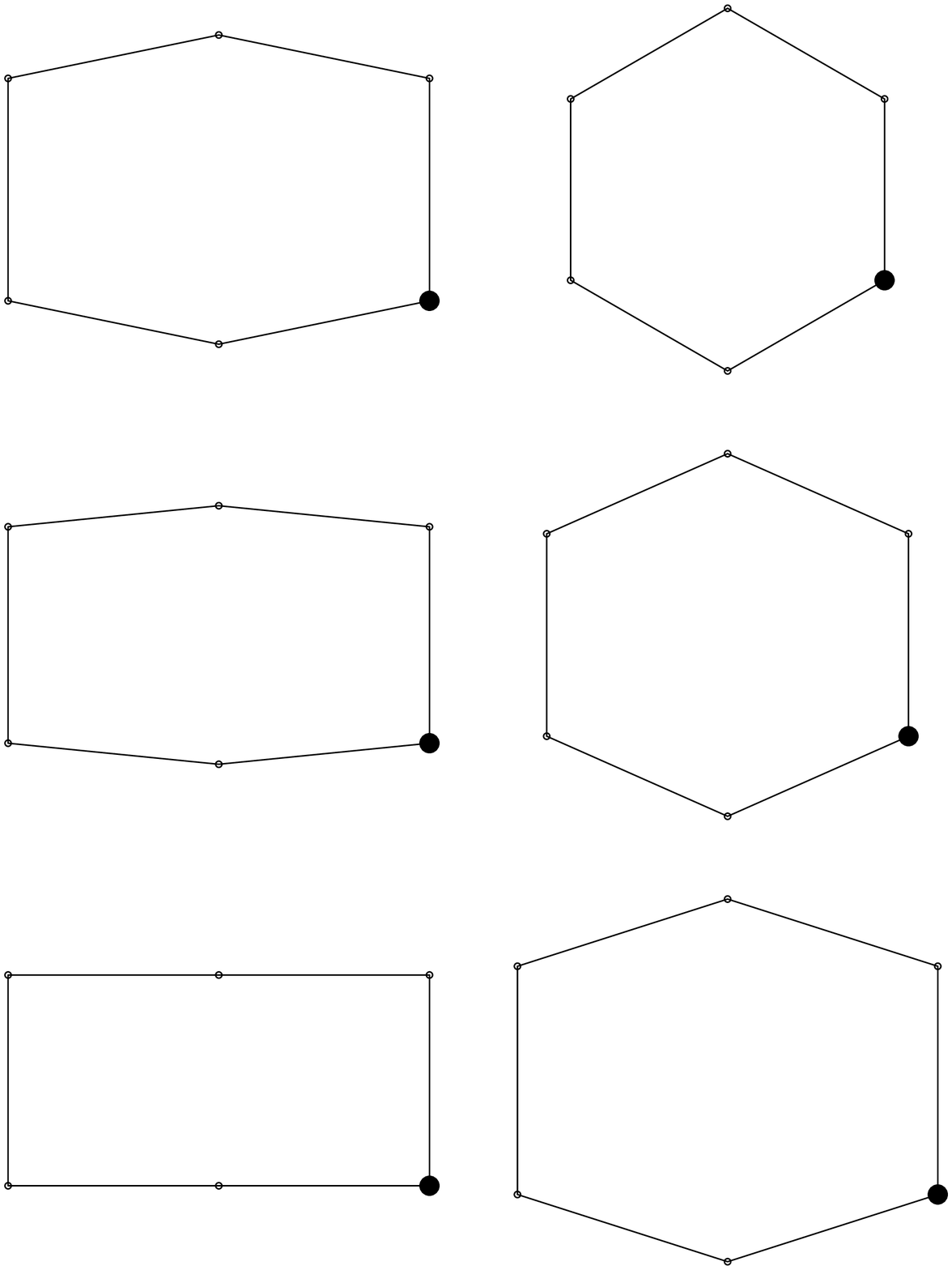}}
	 \subfigure[{\it From rectangle $\sigma_1$ to rectangle $\sigma_2$.}]{\label{Def_R1_R2_K_fig}
	 \includegraphics[angle=-90,width=0.8\textwidth]{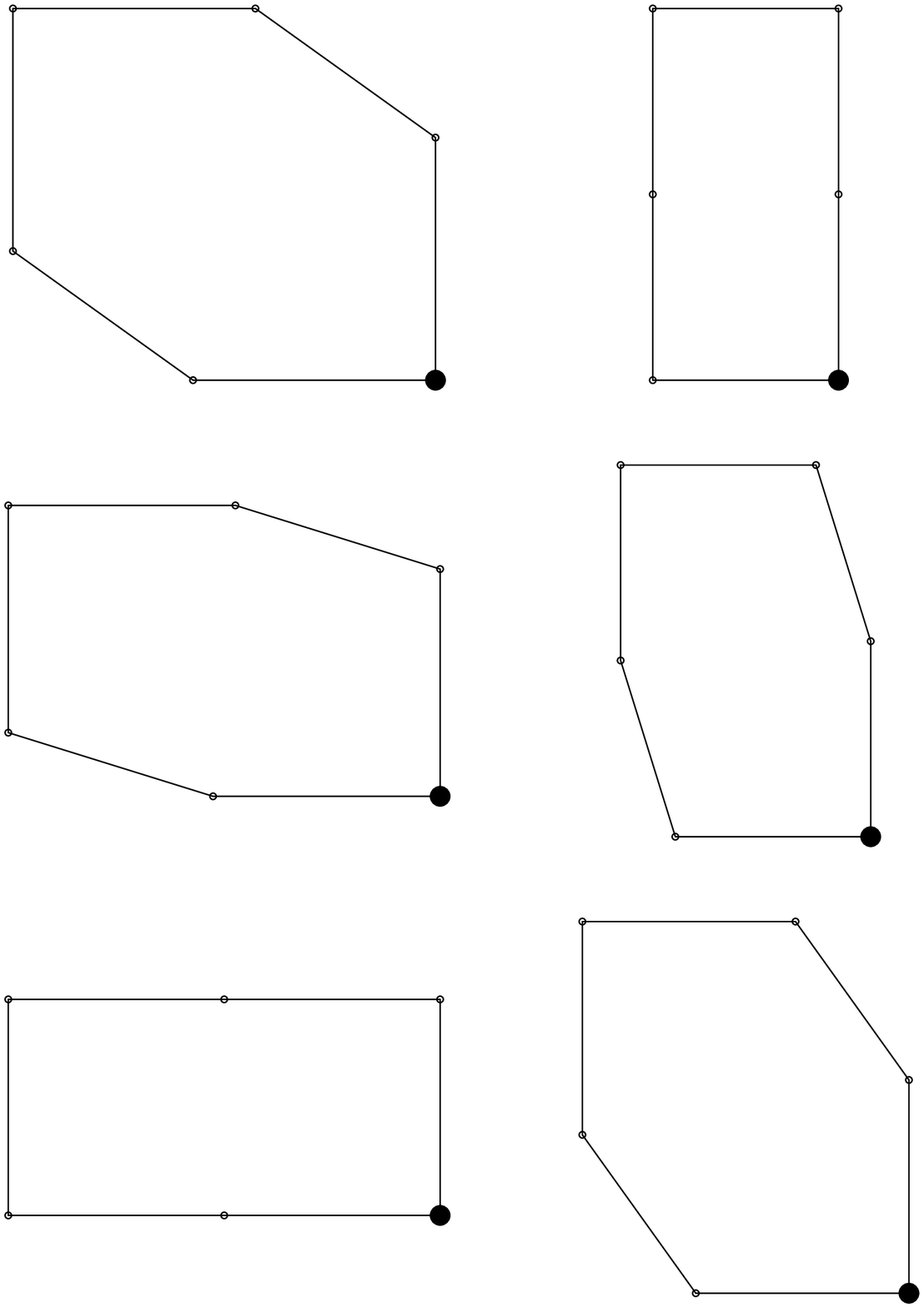}}
	 \subfigure[{\it From $\sigma_2$ to the parallel transplant of $\sigma_3$ from Figure \ref{Def_R1_H_K_fig}.}]{\label{Def_R2H_para_K_fig} 
	\includegraphics[angle=-90,width=0.8\textwidth]{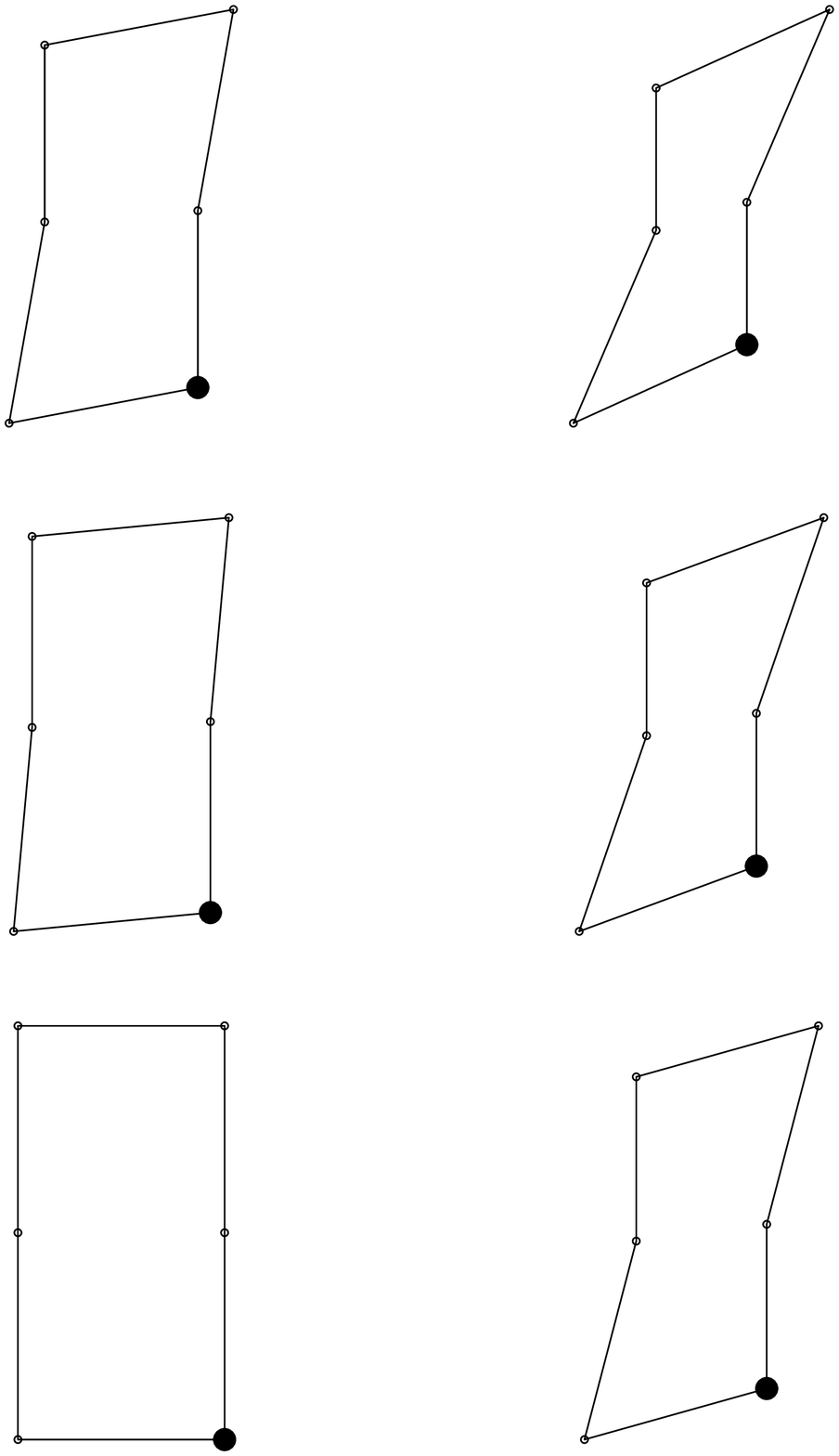}}
	\caption{\it Equidistant deformation along geodesics in Kendall's landmark based shape space $\Sigma^6_2$. The kinks signify landmarks. Notation as in Figure \ref{ZR_Hex_fig}.\label{K_Hex_fig}}
         \end{minipage}
	\end{figure}

\section{Curves with Rotational Symmetry}\label{k-rot-inv:scn}

	Denote by $\Sigma_{ZR^k}$ the sub-space of closed curves with $k$-fold rotational symmetry. The following is an observation of \cite{ZR72}:

	\begin{Th} 
	 $\theta$ represents a closed curve with $k$-fold rotational symmetry $k>1$ if and only if $x_n,y_n =0$ for all $n\not\equiv 0$ mod $k$.
	\end{Th}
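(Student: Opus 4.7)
The plan is to translate $k$-fold rotational symmetry of the plane curve into an exact $2\pi/k$-periodicity of the shape function $\theta$, and then read off the Fourier condition.

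First I would unpack the geometric hypothesis. A closed curve parametrized at constant speed by $z:[0,2\pi]\to\mathbb C$ has $k$-fold rotational symmetry iff there exist a center $c\in\mathbb C$ and a parameter shift $s_0$ such that $z(s+s_0) = c + e^{2\pi i/k}\bigl(z(s)-c\bigr)$. Because the parametrization has constant speed and the rotation has order $k$, $k$ iterations must reproduce the full period, forcing $s_0=2\pi/k$. Differentiating once kills the center and yields $z'(s+2\pi/k) = e^{2\pi i/k}\,z'(s)$. Taking the argument and substituting the definition $\theta(s) = \arg z'(s) - \arg z'(0) - s$ from (\ref{ZR-ode}) gives
\begin{equation*}
\theta(s+2\pi/k) \equiv \theta(s) \pmod{2\pi}.
\end{equation*}

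Next I would remove the $\bmod\, 2\pi$. Since $\theta$ is continuous (it is the sum of a convergent Fourier series in $S_{ZR}$), the function $s\mapsto \theta(s+2\pi/k)-\theta(s)$ is a continuous integer multiple of $2\pi$, hence equals a constant $2\pi m$ for some $m\in\mathbb Z$. Integrating over $[0,2\pi]$ and using $2\pi$-periodicity of $\theta$ shows the left hand side has integral zero, so $m=0$. Thus $\theta$ is exactly $2\pi/k$-periodic.

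Now I would match this periodicity to the Fourier coefficients. Expanding $\theta(s)=\sum_{n\geq 0}\bigl(x_n\cos(ns)+y_n\sin(ns)\bigr)$ and using the addition formulas, $\theta(s+2\pi/k)=\theta(s)$ becomes, for each $n$, the linear system
\begin{equation*}
\begin{pmatrix}\cos(2\pi n/k)-1 & \sin(2\pi n/k)\\ -\sin(2\pi n/k) & \cos(2\pi n/k)-1\end{pmatrix}\begin{pmatrix}x_n\\ y_n\end{pmatrix} = \begin{pmatrix}0\\ 0\end{pmatrix}.
\end{equation*}
The determinant is $4\sin^2(\pi n/k)$, which vanishes exactly when $n\equiv 0\pmod k$; hence $x_n=y_n=0$ whenever $n\not\equiv 0\pmod k$.

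For the converse, if the Fourier condition holds then $\theta$ is $2\pi/k$-periodic. Defining $\tilde z(s) := e^{-2\pi i/k}\,z(s+2\pi/k)$ and differentiating with (\ref{ZR-ode}) shows $\tilde z'(s) = z'(s)$, so $\tilde z$ and $z$ differ by a constant, giving the required rotational symmetry. The only subtle step throughout is the passage from $\theta(s+2\pi/k)\equiv\theta(s)\pmod{2\pi}$ to genuine equality, which I would expect to be the main conceptual obstacle; it is resolved by the continuity/periodicity argument above.
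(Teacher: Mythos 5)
The paper itself contains no proof of this statement; it is recorded as an observation of \cite{ZR72}, so there is no internal argument to compare yours against. Your proof is, in outline, the natural (and presumably the original) one, and it is essentially correct: translate the symmetry into exact $2\pi/k$-periodicity of $\theta$, then diagonalize the shift on each Fourier mode. The continuity argument removing the $\bmod\,2\pi$ ambiguity (integrating the constant $2\pi m$ over a period), the determinant computation $4\sin^2(\pi n/k)$, and the converse via integrating (\ref{ZR-ode}) and absorbing the integration constant into a center $c$ with $c(1-e^{2\pi i/k})=e^{2\pi i/k}b$ are all fine.

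One step is justified too quickly: the claim $s_0=2\pi/k$. From $z(s+s_0)=c+e^{2\pi i/k}\bigl(z(s)-c\bigr)$, iterating $k$ times only gives $ks_0\in 2\pi\mathbb Z$, i.e.\ $s_0=2\pi j/k$ for some $j\in\{1,\dots,k-1\}$; constant speed and the order of the rotation alone do not force $j=1$. This matters, because for $j\neq 1$ your computation would give $\theta(s+2\pi j/k)\equiv\theta(s)+2\pi(1-j)/k \pmod{2\pi}$ rather than periodicity, and the Fourier condition would come out differently. Ruling out $j\neq 1$ genuinely uses the standing hypothesis that the curve winds once around each interior point: with $\alpha(s)$ a continuous branch of $\arg\bigl(z(s)-c\bigr)$ one has $\alpha(s+s_0)=\alpha(s)+2\pi/k+2\pi n$ for a fixed integer $n$, and comparing the total increment over $k$ such steps with the winding number $1$ forces $j=1+nk$, hence $j=1$. (A pentagram-type curve of higher winding number shows the conclusion can fail without this hypothesis.) With that supplement your argument is complete.
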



	Since the arithmetic sum of closed curves with $k$-fold rotational symmetry is again of $k$-fold rotational symmetry we have at once:

	\begin{Cor} For each $k=2,3\ldots$, $\Sigma_{ZR^k}$ is a flat linear submanifold of $\Sigma_{ZR}$ and  parallel transport on $\Sigma_{ZR^k}$ is affine.
	\end{Cor}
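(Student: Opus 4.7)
The plan is to reduce the corollary to the elementary fact that a closed linear subspace of a Hilbert space, with the restricted inner product, is flat and carries affine parallel transport. The preceding theorem already identifies $\Sigma_{ZR^k}$ with the set of $\theta\in\Sigma_{ZR}$ whose Fourier coefficients $x_n,y_n$ vanish for $n\not\equiv 0\pmod k$, and this is a linear condition on the coefficients in $\lw^2$.

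First I would verify that $\Sigma_{ZR^k}$ is in fact a closed linear subspace of $\lw^2$. The ambient pre-shape space $S_{ZR}$ is already cut out by linear relations ($x_0=-\sum_{n\geq 1}x_n$, $y_0=0$) and the Fourier-support condition is linear, so the only non-trivial point is the nonlinear closure constraint $\Psi(\theta)=0$. I would show it is automatic whenever $\theta$ has Fourier support in $k\mathbb Z$: such a $\theta$ is $\tfrac{2\pi}{k}$-periodic, so splitting $\int_0^{2\pi}e^{i(\theta(s)+s)}\,ds$ into $k$ subintervals and substituting $u=s-\tfrac{2\pi j}{k}$ on the $j$th piece produces the common factor $\sum_{j=0}^{k-1}e^{2\pi i j/k}$, which vanishes for $k\geq 2$. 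Equivalently, the integrated tangent vector of a $k$-fold rotationally symmetric closed curve must be invariant under a non-trivial planar rotation and hence must vanish.

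Once $\Sigma_{ZR^k}$ is recognised as a vector subspace of $\lw^2$ with the restricted inner product, flatness is immediate: the induced Levi--Civita connection agrees with the flat connection of the ambient Hilbert space, because the second fundamental form of a linear subspace vanishes. Its geodesics are therefore straight line segments, and by (\ref{Eucl_nabl_eq}) parallel transport along them reduces to the identity on Fourier coefficients, i.e.\ to an affine translation. Note that the computation takes place entirely within $\Sigma_{ZR^k}$ with its intrinsic metric, so one need not worry about whether these segments are also geodesics of the larger space $\Sigma_{ZR}$.

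The main (and essentially only) obstacle is the verification that $k$-fold symmetric Fourier support already forces $\Psi(\theta)=0$; everything after that is elementary linear Hilbert geometry, with the $V_j$-correction term of Theorem~\ref{imm-parallel-th} simply not contributing once one works inside the flat subspace $\Sigma_{ZR^k}$.
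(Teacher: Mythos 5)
Your proof is correct and follows essentially the same route as the paper, which simply observes that arithmetic sums of closed $k$-fold symmetric curves are again closed $k$-fold symmetric curves, so that $\Sigma_{ZR^k}$ is a linear subspace and hence flat with affine parallel transport. The one point you spell out that the paper leaves implicit is that the nonlinear closure constraint $\Psi(\theta)=0$ holds automatically on the $k$-fold symmetric Fourier subspace (your roots-of-unity computation), which is a worthwhile clarification but not a different argument.
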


%

	Within the subspace  $\Sigma_{ZR^2}$ of closed curves with two-fold symmetry consider a simple example of three shapes: two rectangles $\sigma_1$ and $\sigma_2$ differing only by their initial point and a hexagon $\sigma_3$. The `same' (i.e. parallel) deformation from $\sigma_1$ to $\sigma_3$ is applied to $\sigma_2$.

	Figure \ref{ZR_Hex_fig} illustrates that deformation in the geometry of Zahn-Roskies' shape space, Figure \ref{K_Hex_fig} gives it in the geometry of Kendall's shape space. Comparing the respective Subfigures (a) and (b) over the different geometries shows that the geodesic deformation with fixed initial and terminal shape gives almost identical intermediate shapes. 
	
	In the respective Subfigures (c), the difference between `same' shape deformation over the two geometries is hardly notable in the beginning of the deformation. Near the end, however, it becomes notable: number the kinks (the landmarks in $\Sigma_2^6$) counterclockwise from 1 (bullet) to 6 and denote by $(i,j)$ the line connecting the $i$-th kink with the $j$-th kink. Then $(2,5)$ remains parallel to $(6,1)$ and $(3,4)$ in the Kendall geometry whereas in the Zahn-Roskies geometry it turns in direction beyond $(2,3)$ and $(5,6)$.


\section{Parallel Leaf Growth}\label{App:scn}

	Let us quickly overview some very recent developments within two millenia of research on plant form. With the application below in mind, we are interested in a flexible and realistic representation of leaf contour shape. Flexibility in this context means that we are looking for a model in which nature not only chooses values of parameters in a pre-defined parameter space but rather the parameter space itself. We caution that such a \emph{non-parametric} model may come at the cost that nature's parameters may not be simply geometrically interpretable 

	\emph{Parametric} in this sense are the well established models involving allometry, still of interest today: e.g.  \cite{Gur92}; \cite{Burton04}, or the superformula of \cite{Gielis03}. Also, models based on landmarks such as \cite{DPS87}; \cite{Jensen90} or \cite{JCM2002}, can be viewed as projecting nature to a pre-specified parameter space by leaving out the parts of the contour between landmarks. Note that parametric models are highly successful e.g. for plant classification, genetic hybrid identification, cf. \cite{JM05}, or in the \emph{Climate Leaf Analysis Multivariate Program (CLAMP)} of \citet{Wolfe93} which is fundamental to paleoclimate and present day climate reconstruction, cf. \cite{EBG00}. 

	On the other hand, models building on the shape spaces of Zahn and Roskies (cf. Section \ref{ZR-scn}) are \emph{non-parametric}, even though they are not entirely free of constraints: in view of landmark-based shape analysis (cf. Section \ref{Kendall_scn}), restricting to unit speed velocities translates into infinitesimally placed landmarks. The different geometry,  however, liberates from the necessity to identify homologous landmarks, by imposing infinitesimal uniform growth. The latter is certainly debatable.  Curiously, such non-parametric models introduced as \emph{eigenshape analysis} (building on \cite{Loh83}) have initially stirred  controversy because parameters were not simply geometrically interpretable and because with lacking initial point, registration was not satisfactory, cf. \cite{R86}. While the former is precisesly a desired feature, 
	introducing the geometric concept of the quotient $\Sigma^I_{ZR} = \big(\Sigma_{ZR}\setminus\{0\}\big)/S^1$ by \cite{KSMJ04} settles the latter objection. It seems, however, that the natural non-Euclidean geometry of $\Sigma_{ZR}$ is not fully realized in the community, cf. \cite{Ray92}; \cite{KGS07}; \cite{Hearn09}.

	For sake of completeness, even though not practicable for our purpose because of high sensitivity to boundary noise, let us briefly mention a third approach of shape modeling based on the leaf's vein structure. With methods for automated venation extraction available (cf. \cite{FuChi06}), although computationally much more challenging than contour extraction, \cite{LZG09} link vein structure to the concept of shape spaces by \cite{BN78} based on medial skeletons, cf. also \cite{PSSDZ03}. Undoubtedly, modeling the vein structure gives deep insight into physiological, hydraulical and biomechanical aspects of leaf formation. Parameter spaces thus obtained should be closest to nature in the above sense. Current research in venation patterns, however, shows that leaf shape diversification is still poorly understood (e.g. \cite{NPT07}). 

	Obviously, for our purpose of modeling entire leaf contours while being as non-parametric as possible, the space $\Sigma_{ZR}$ suits ideally. For the problem at hand there is no need for pre-registration as the leaves in question are naturally aligned by petiole (the base point where the stalk enters the blade forming the main leaf vein) and apex (the terminal point of the main vein, usually the leaf tip) location. One could almost equivalently align by petiole location and the initial direction of the main leaf vein.

	\begin{figure}[h!]
	 \includegraphics[angle=-90,width=0.45\textwidth]{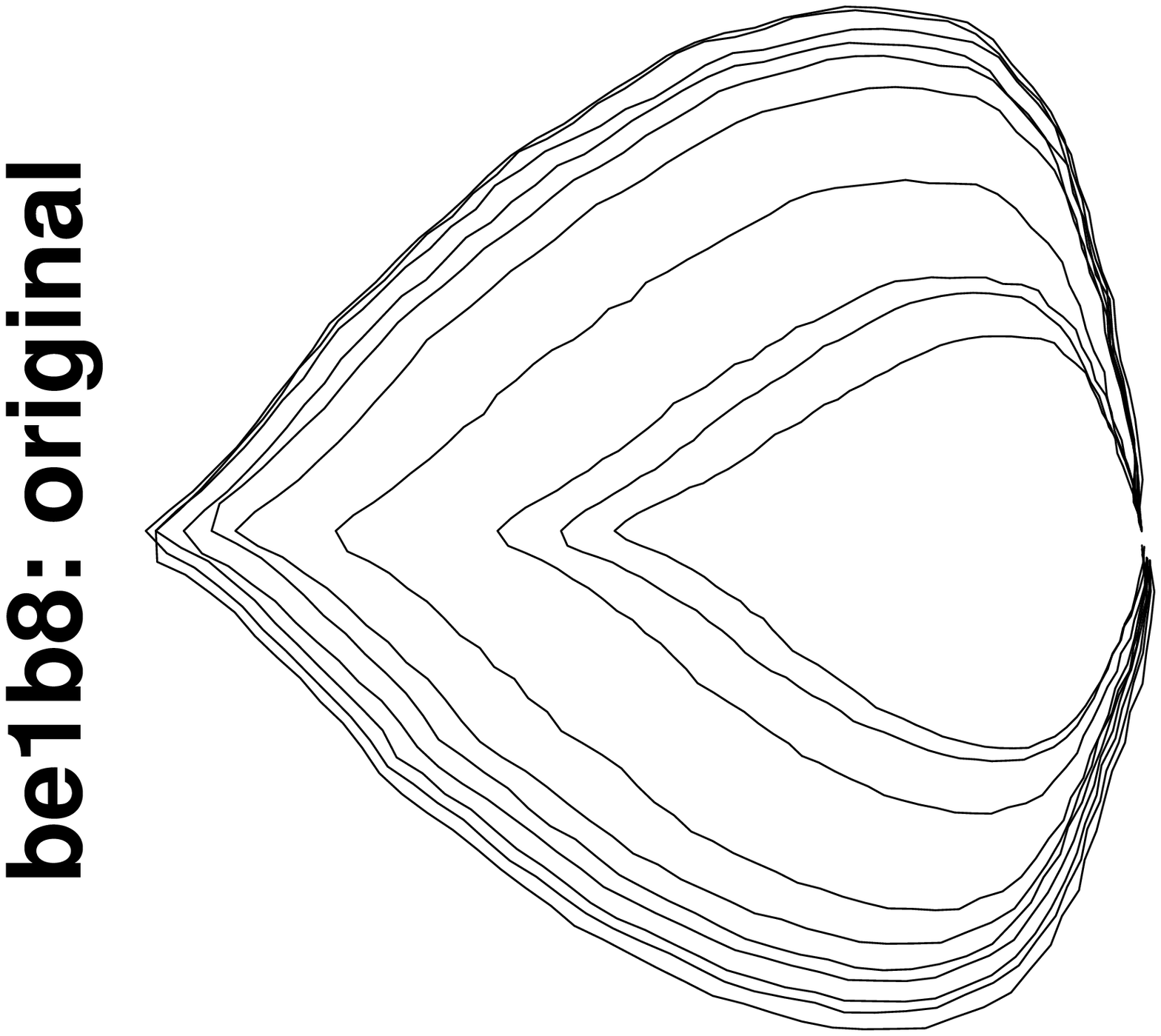}
    	\includegraphics[angle=-90,width=0.45\textwidth]{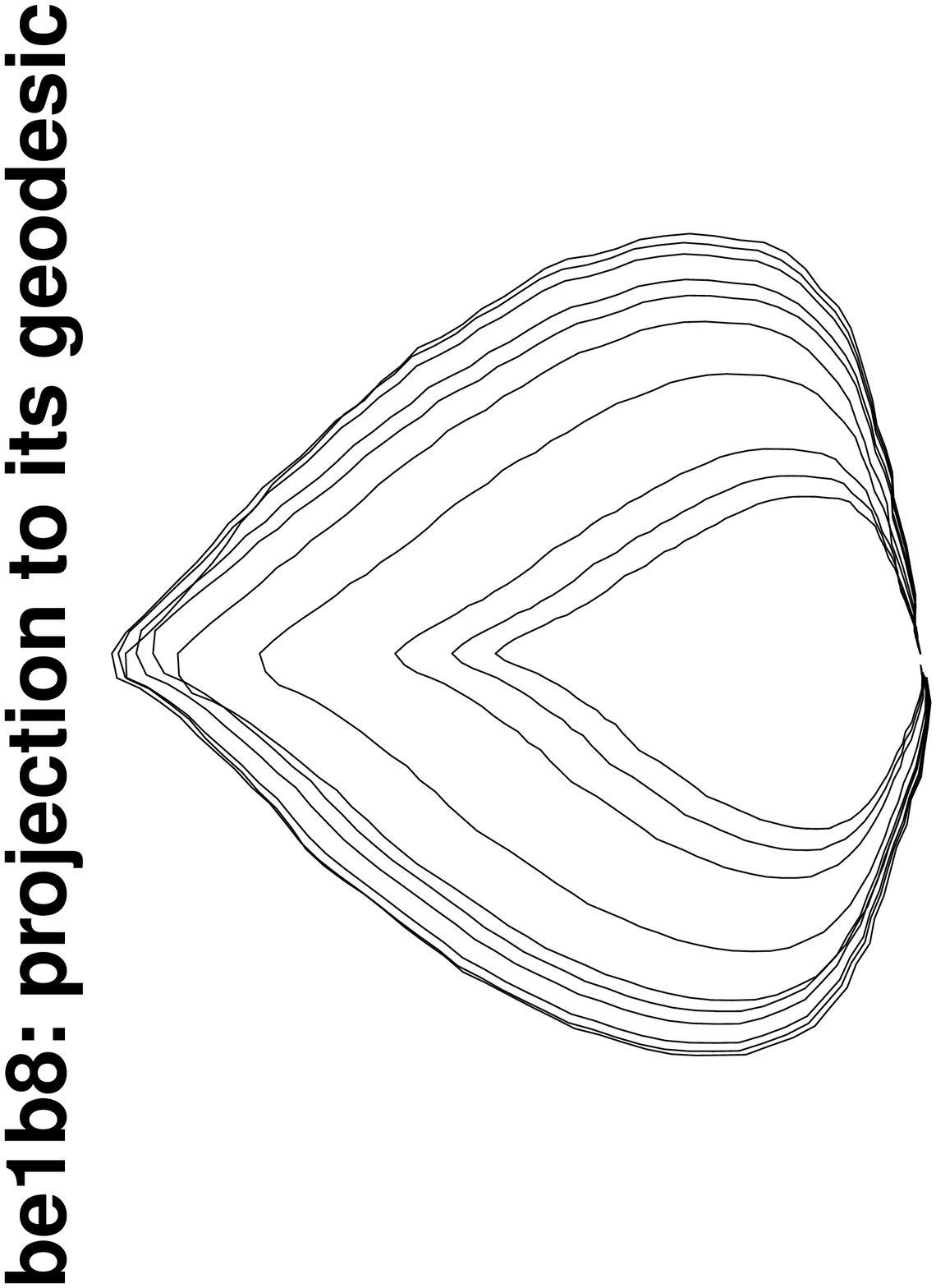}
	\caption{\it Shape evolution of a black poplar leaf over two weeks. Left: original contours. Right: contours obtained from projecting to geodesic evolution. \label{ori-geod-be1b8.eps}}
	\end{figure}

	\begin{figure}[h!]
	\centering
	 \includegraphics[angle=-90,width=0.45\textwidth]{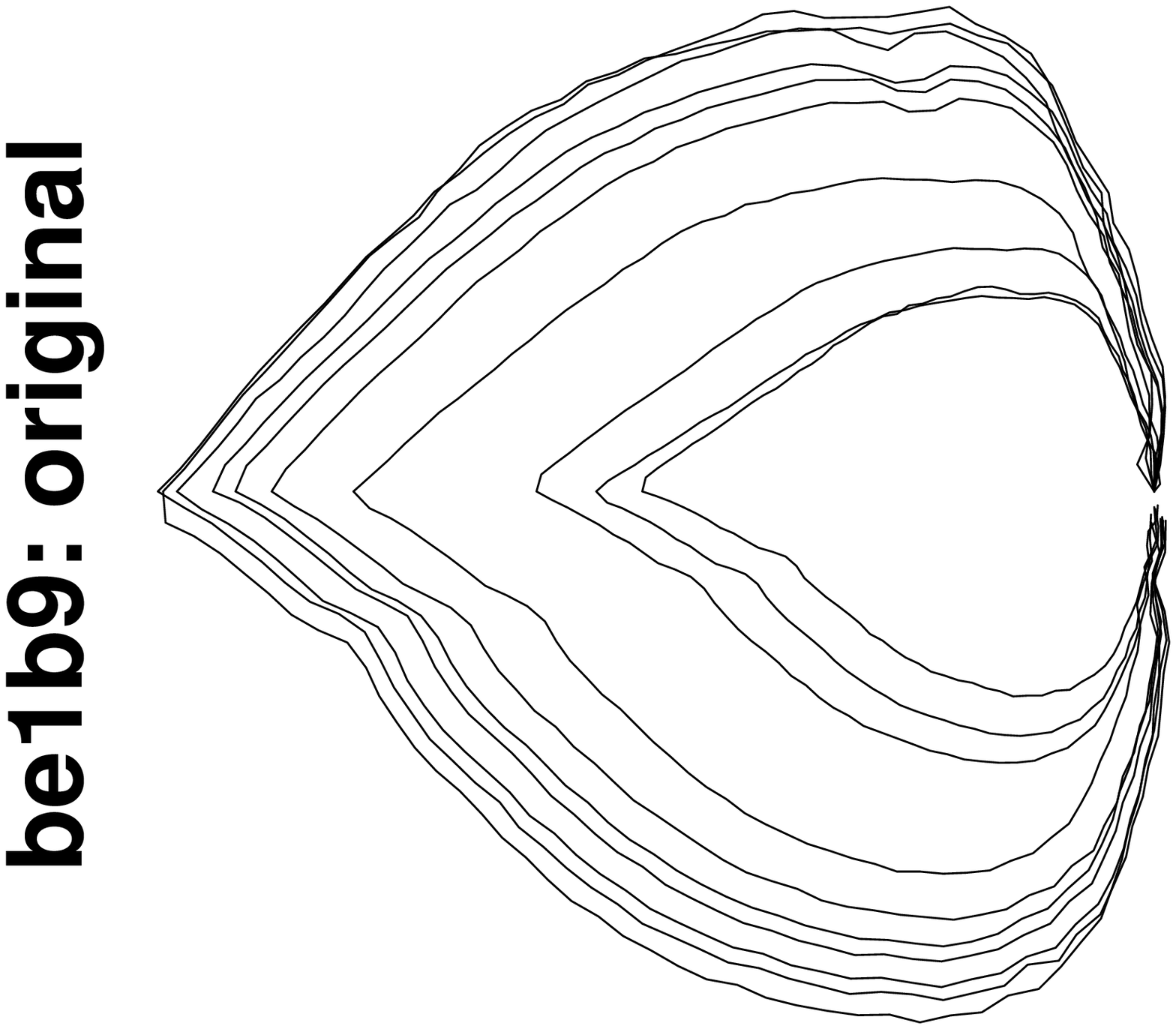}
    	\includegraphics[angle=-90,width=0.45\textwidth]{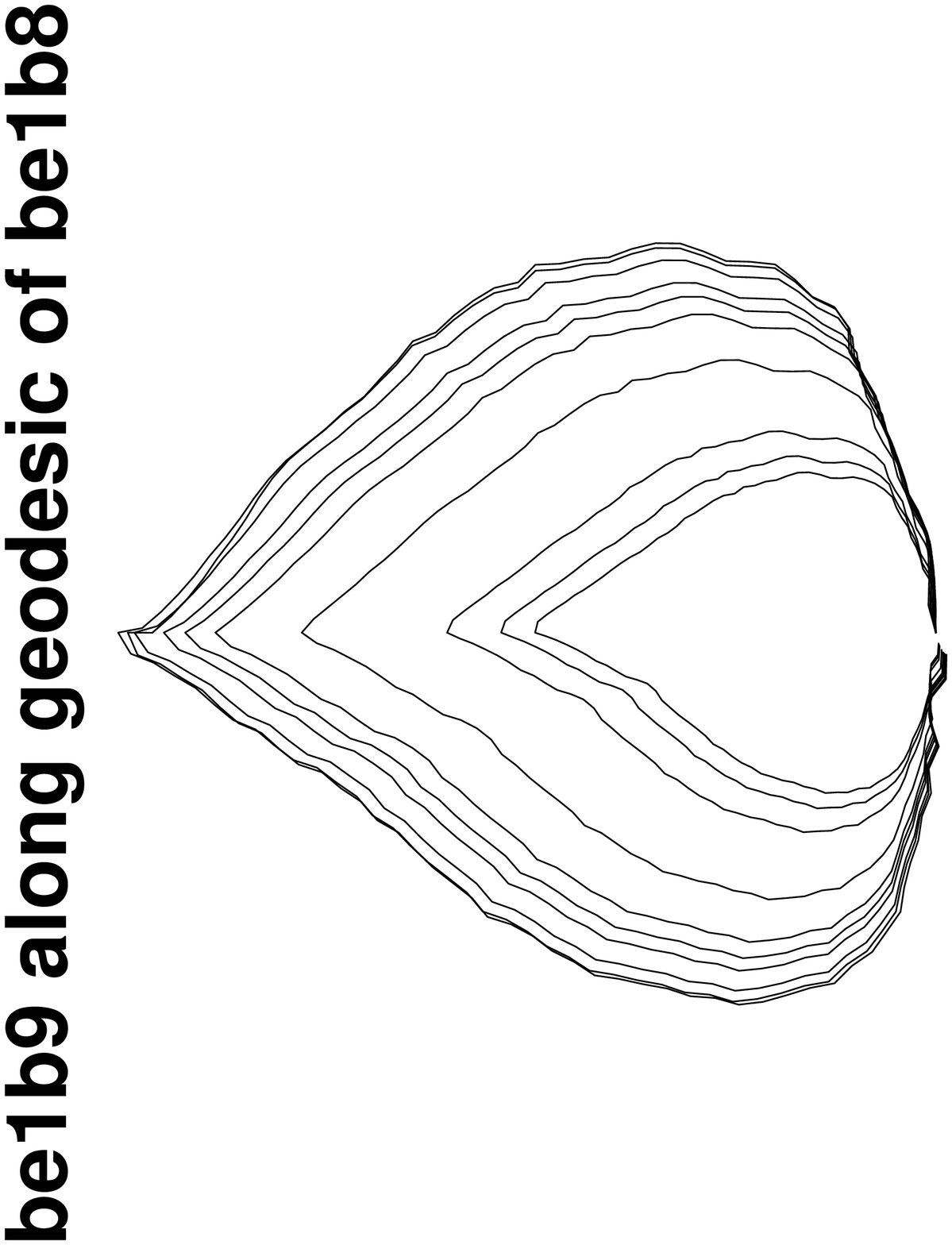}
	 \includegraphics[angle=-90,width=0.45\textwidth]{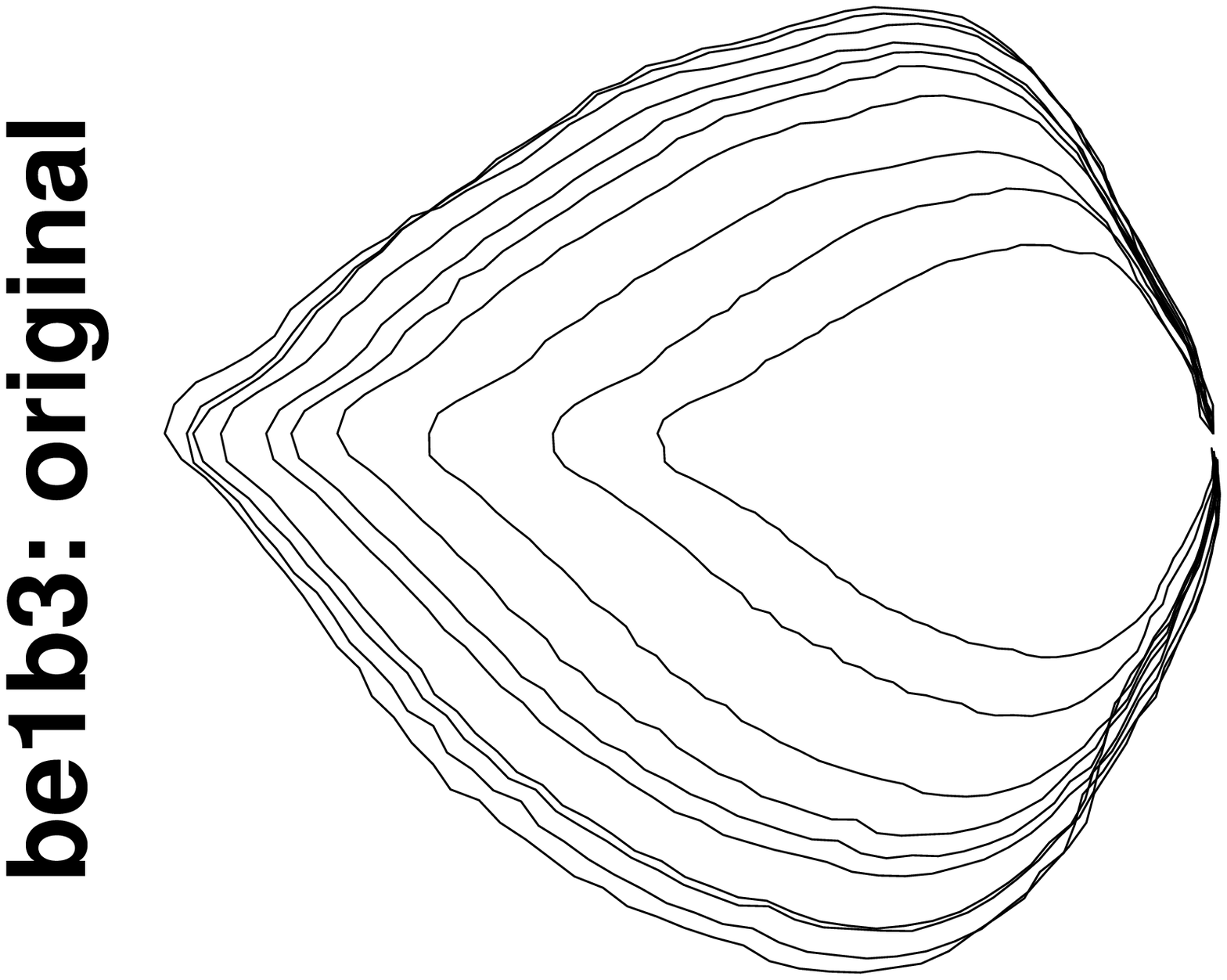}
    	\includegraphics[angle=-90,width=0.45\textwidth]{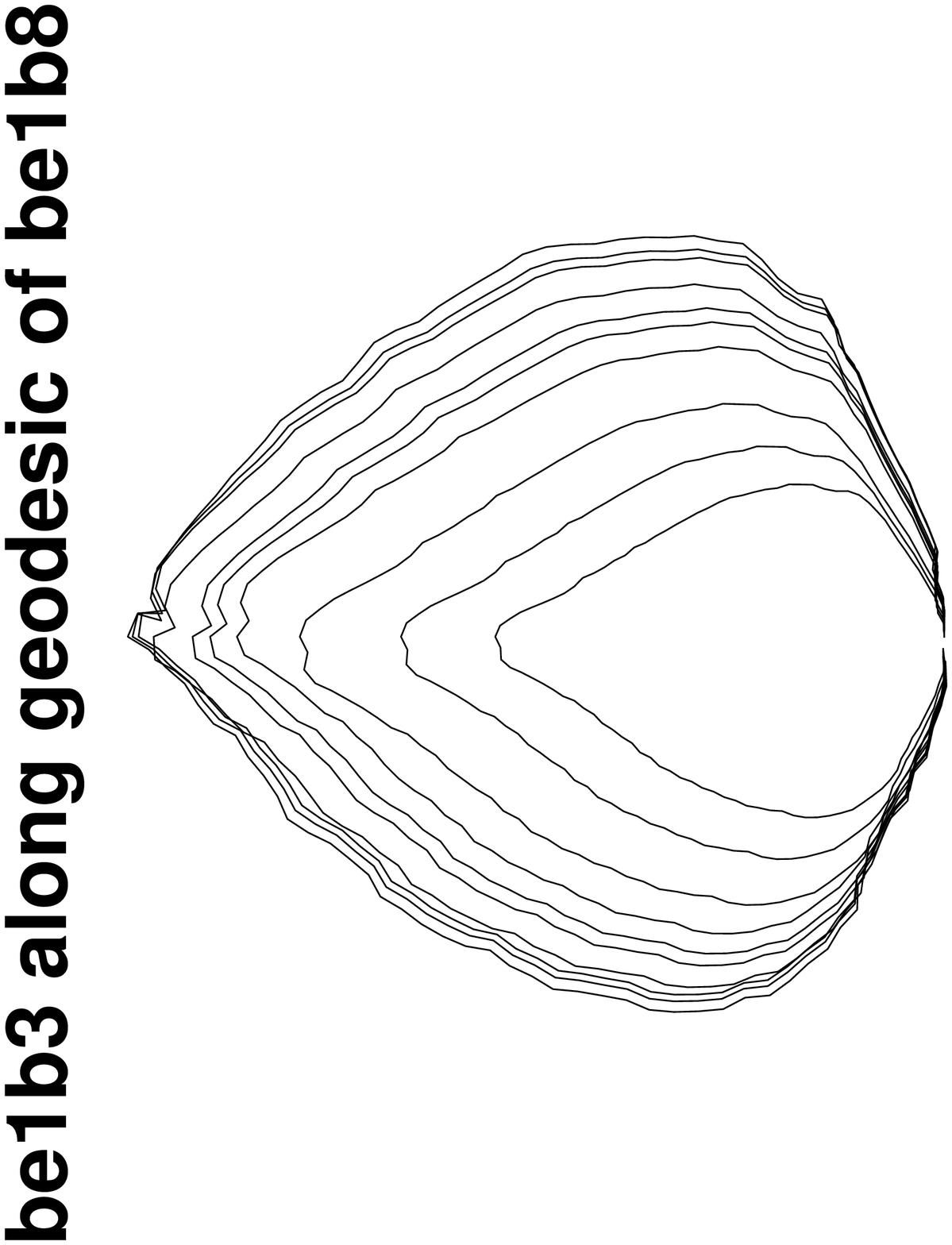}
	 \includegraphics[angle=-90,width=0.45\textwidth]{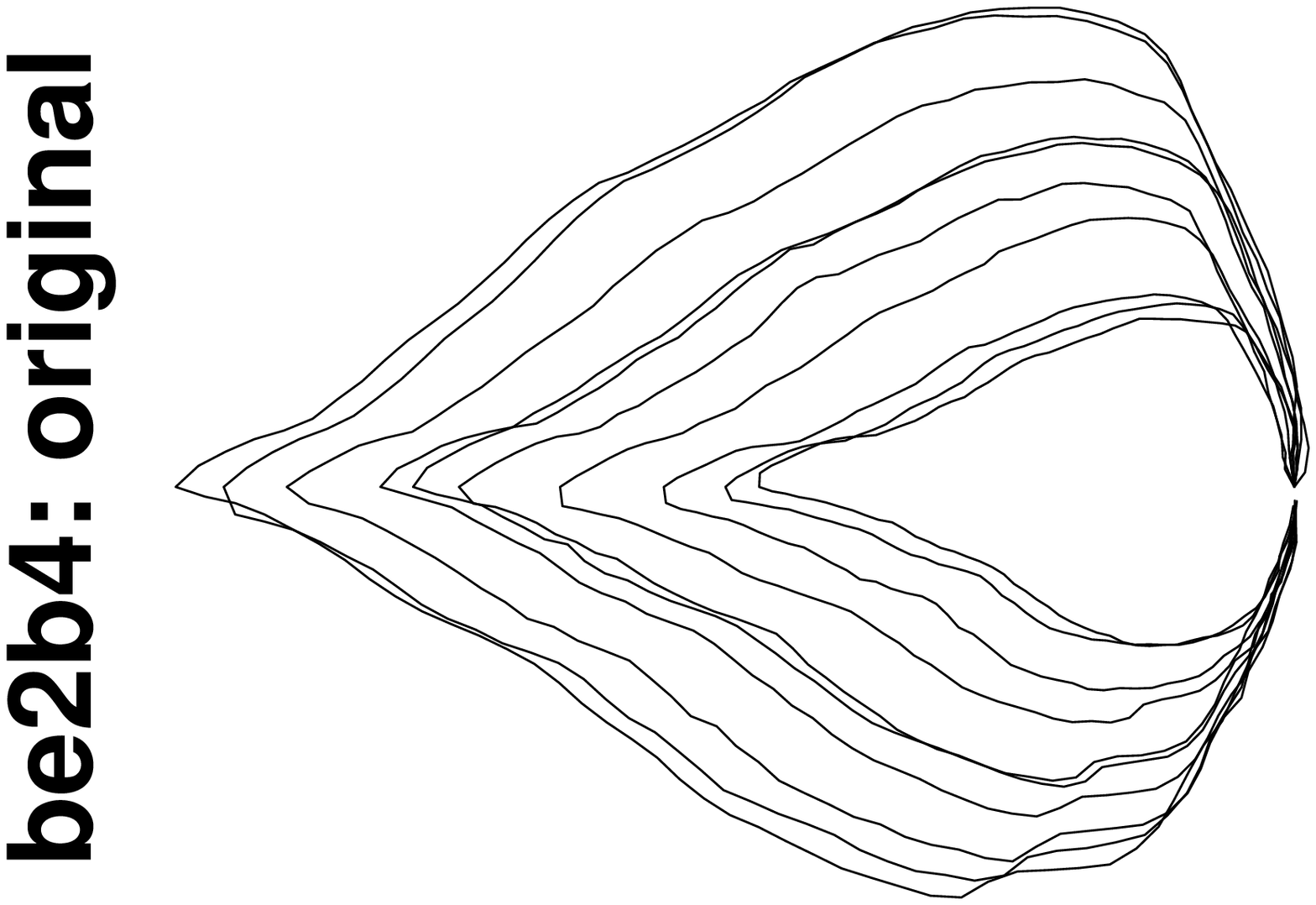}
    	\includegraphics[angle=-90,width=0.45\textwidth]{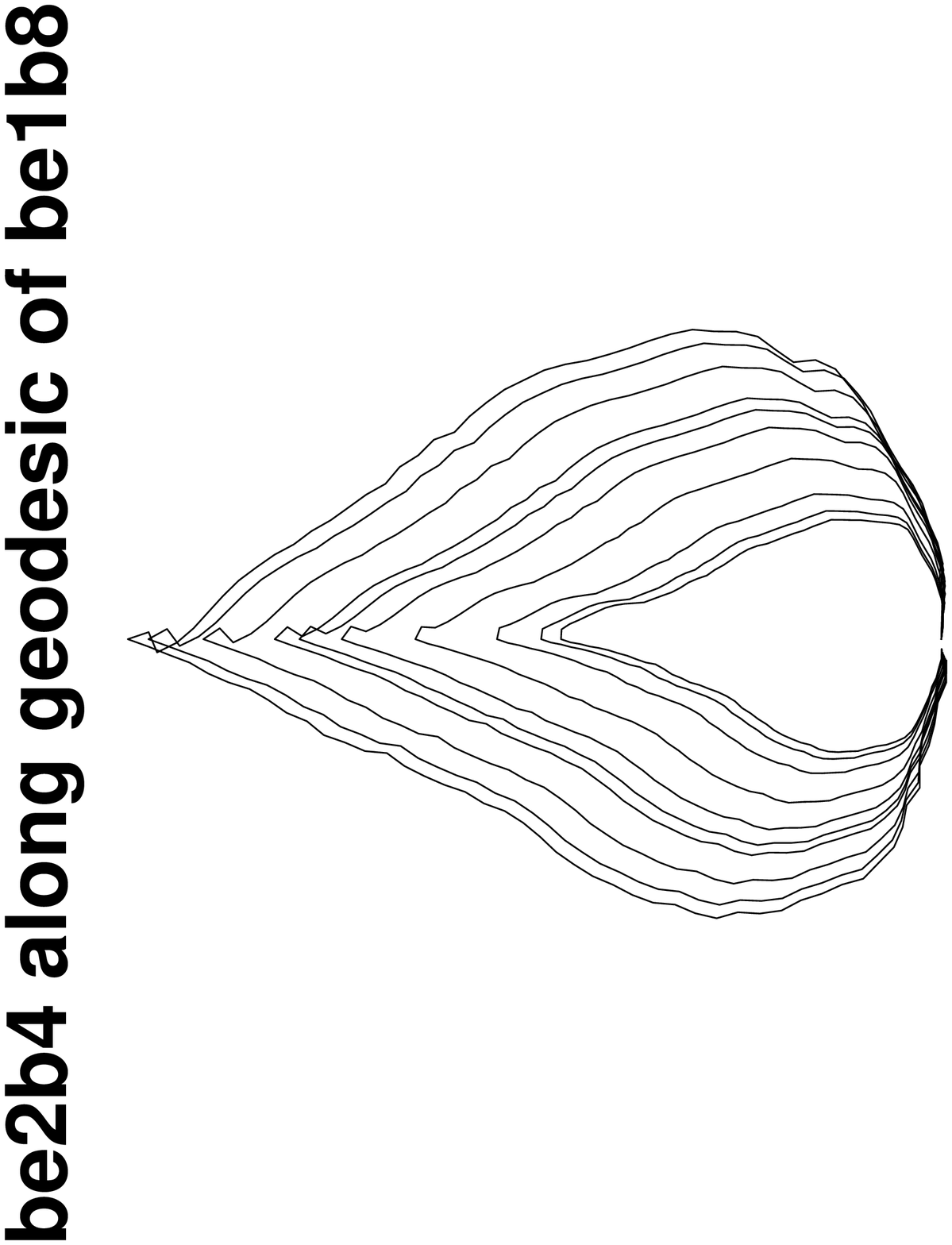}
	 \includegraphics[angle=-90,width=0.45\textwidth]{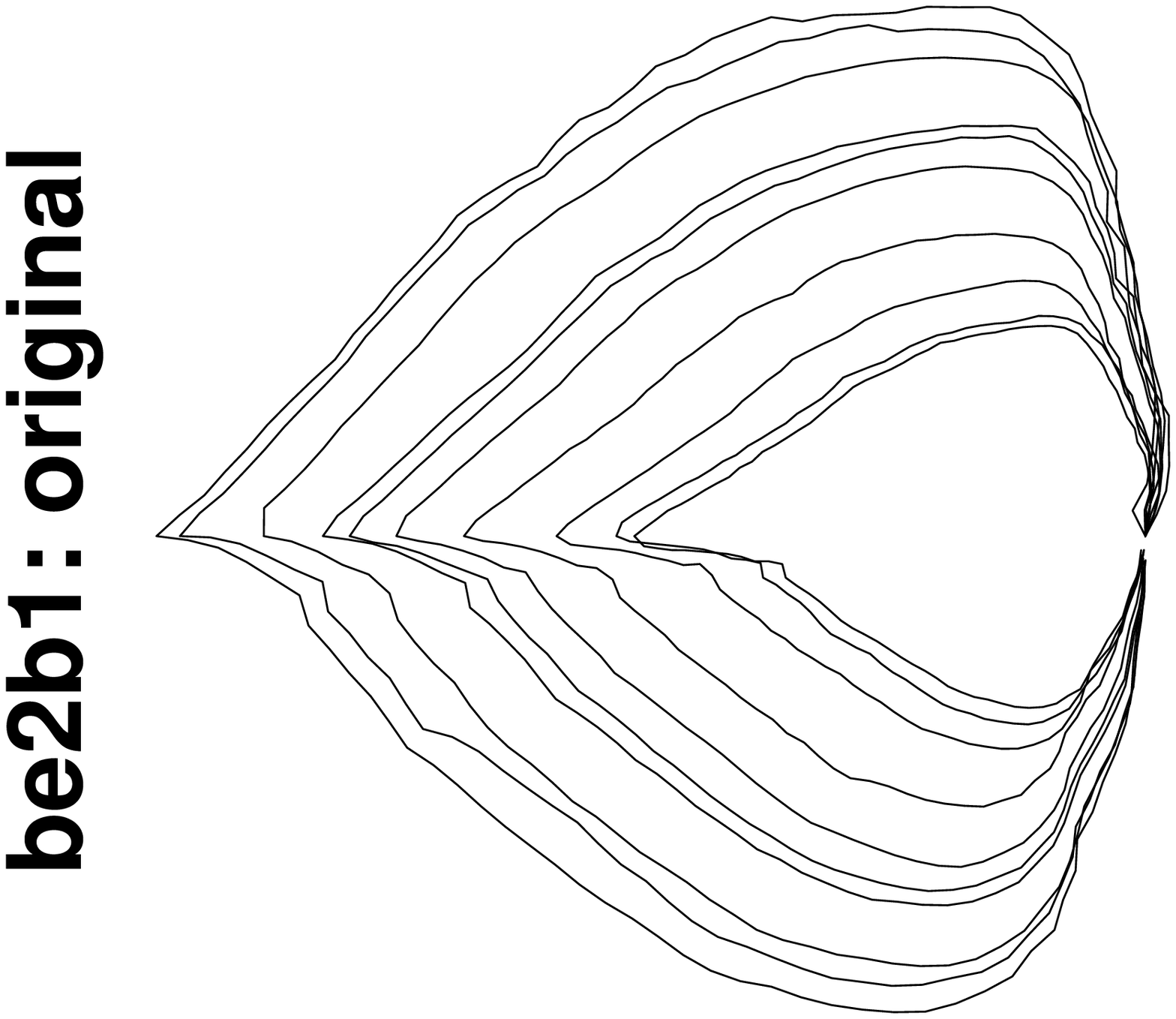}
    	\includegraphics[angle=-90,width=0.45\textwidth]{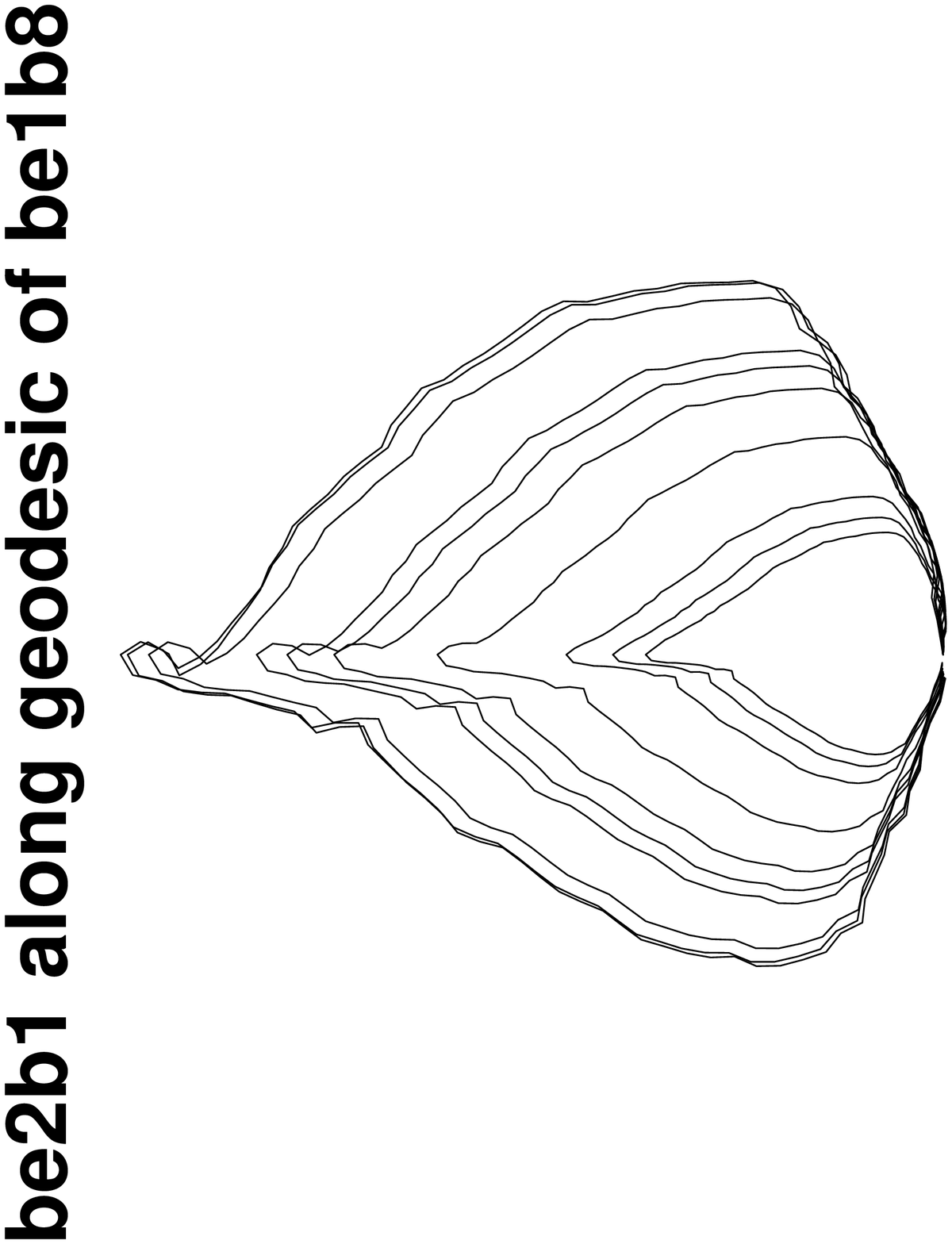}
	\caption{\it Shape evolution of black poplar leaves over two weeks. Left column: original contours. Right column: contours obtained from traversing the first contour along the parallel translate of the geodesic from Figure \ref{ori-geod-be1b8.eps}. \label{ori-parallel-tp-be1b8-2.eps}}
	\end{figure}

	In an application we consider five leaves of a Canadian black poplar tree at an experimental site of the Department of Forest Biometry and Tree Physiology of the University of G\"ottingen. Their contours have been non-destructively extracted over their growing period at 10 approximately evenly spaced days. 
	For the following computations we have considered the Euclidean $(2N+1)$-dimensional subspace of $\lw^2$ using Fourier coefficients, $(x_0,x_1,y_1,\ldots,x_N,y_N)$ with $N=100$.
	 
	Leaf ``be1b8'' (left image of Figure \ref{ori-geod-be1b8.eps}) exhibits the most regular shape. In concord with earlier observations 
	of different leaves using landmarks  (cf. \cite{HHGMS07}), the temporal evolution occurs almost along the geodesic determined by initial and terminal shape (as depicted in the right image of Figure \ref{ori-geod-be1b8.eps}). The initial direction of this geodesic has been parallely transplanted to the initial shapes of leaves ``be1b9'', ``be1b3'',	``be2b4'' and ``be2b1''. In the right columns of Figure 
	\ref{ori-parallel-tp-be1b8-2.eps}, the shapes along these new geodesics starting at the corresponding initial shapes have been recorded at the corresponding points in time. The left columns of these figures depict the original temporal shape evolution.

	The common shape dynamics displayed by the original leaf contours (left columns of Figures \ref{ori-geod-be1b8.eps} and \ref{ori-parallel-tp-be1b8-2.eps}) seems two-fold. First, an increase of base angle. Second, different growth ratios are not visible at the apex as its angle remains nearly unchanged. Individual effects are non-symmetric and non uniform lateral growth. Also, leaf 'be1b3' develops a notch left, slightly below the apex, for leaf 'be2b1' an original notch also left, slightly below the apex attenuates.


	Obviously (right columns of Figures \ref{ori-geod-be1b8.eps} and \ref{ori-parallel-tp-be1b8-2.eps}) leaves  ``be1b9'', ``be1b3' and `be2b4'' follow rather closely the parallel transplant of the geodesic of leaf ``be1b8''. Original non-uniform growth is uniformized and, stronger than originally, apexes acuminate. Even though all of their initial and terminal shapes are quite different, one can say that their temporal evolution is rather similar. This seems to be less the case for leaf ``be2b1''. Its observed growth tends to eliminate its initial strong dent at north-west-north while along the transplanted geodesic, this dent remains, causing increased distal growth at the tip. One could argue that in order to restore an original contour defect, natural growth deviates from its ``original'' plan. Certainly, such phenomena deserve future research. 

	\begin{table}[h!]
	\centering
	\fbox{\begin{tabular}{r|cccc}b&be1b8&be1b3&be2b4&be2b1\\\hline
 	$\rho(v_{\rm b},w_{\rm b})$&$ 0.17  $&$ 0.12  $&$ 0.44$&$0.083 $\\
 	$\mu(v_{\rm b},w_{\rm b})$&$0.99 $&$  0.96  $&$1.0$&$0.88 $
	\end{tabular}}
	\caption{\it Measuring parallelity of geodesics at first shape. Top row: non-central correlation (\ref{non-central-corr:def}) of Fourier coefficients of $v_{\rm b}$ (initial velocity of geodesic approximating leaf shapes of leaf 'b' at its first shape) and parallel transplant $w_{\rm b}$ of $v_{\rm beb18}$ (initial velocity of geodesic approximating leaf shapes of leaf 'be1b8' at its first shape)  to the first shape of leaf 'b'. Second row: the same for the $(1-p)$-values obtained from (\ref{parallel-meas:def}).\label{cor:tab}}
	\end{table}

	As a measure for parallelity, the cosine of the angle between initial velocity of geodesics or equivalently the  the correlation of the respective Fourier coefficients could be taken, cf. first row in Table \ref{cor:tab}. At first glance, in contrast to Figure \ref{ori-parallel-tp-be1b8-2.eps}, these numbers suggest an almost non-existent amount of parallelity. 
	On closer inspection, taking into account, however, that the vectors compared are high-dimensional (of dimension $2N+1=201$), the first and third number of the  second row in Table \ref{cor:tab} indicate high correlation as expressed in the third row: if two random vectors $v,w$ would be independently sampled from a uniform distribution on the $(n-1)$-dimensional unit-sphere, then the density of their angle $\phi$ is proportional to the surface of the $(n-2)$-hypersphere with radius $\sin\phi$ determined by this angle. Based on this consideration we propose the following measure for parallelity of random $v,w\in \mathbb R^n$
	\begin{eqnarray}\label{parallel-meas:def}
	\mu(v,w) &=& 1-\frac{\int_0^{\sqrt{\arccos\rho(v,w)}}\sin^{n-2}\phi\,d\phi} {\int_0^{\pi}\sin^{n-2}\phi\,d\phi}\,,~~\mbox{with }\\\label{non-central-corr:def}
	\rho(v,w) &=& \frac{|\langle v,w\rangle|}{\|v\|\,\|w\|}\,. 
	\end{eqnarray}

	Finally, let us note that the curves of the contours of the leaves be2b4 and be2b1 projected along the transplanted geodesic of leaf be1b8 start to self-intersect as the leaves grow older. While the effect is rather small, it may indicate that the natural metric of $\Sigma_{ZR}$ be adjusted in order to maintain the hypothesis of geodesic growth.

\section{Discussion and Outlook}

	In this exposition a method to compare shape dynamics has been proposed based on parallel transport of geodesics. While there is quite a few work available, modelling temporal shape evolution by specific curves and splines in shape space ((e.g. \cite{JK87,KMMA01,KDL07}), to the knowledge of the author this is the first time that dynamical aspects of different shapes have undergone a comparison based on the intrinsic geometry of shape space. 

	 In application to Botany, growth of leaves of different shapes has been compared. As underlying shape representation the space $\Sigma_{ZR}$ of closed contours based on angular direction  has been employed. Within this space, in contrast to other models, entire leaf contours can be retrieved in a non-parametric way. In a simple toy example, parallel transport on $\Sigma_{ZR}$ seemed locally similar to parallel transport on Kendall's shape spaces. 

	 This similarity can be rephrased as saying that the landmarks have been ``correctly chosen''. Parallel transport thus may serve as a tool to address an open problem in landmark based shape analysis: optimize number and location of landmark placement for a specific problem at hand. Obviously, too few and wrongly placed landmarks have low predictive power, while too many landmarks reduce power due to undesired variation. Dealing with this latter effect usually requires further methodology, e.g. statistical regularization. 

	In the context of \emph{dynamical shape analysis} for leaves, under optimal landmark placement, since computation of geodesics, e.g. geodesic PCA, is computationally much faster on Kendall's shape spaces than on $\Sigma_{ZR}$, one can perform parallel transport on Kendall's shape spaces and obtain the complete bounding contour in a non-parametric way by mapping to $\Sigma_{ZR}$. This lays out a path 
	for the future steps of the challenging endeveavor of statistically comparing shape dynamics laid out in the Introduction.


%

\bibliographystyle{../../BIB/elsart-harv}
\bibliography{../../BIB/shape,../../BIB/botany}
\end{document}